\crefname{assumption}{Assumption}{Assumptions}
\crefname{problem}{Problem}{Problems}
\crefname{claim}{Claim}{Claims}
\crefname{fact}{Fact}{Facts}
\theoremstyle{plain}
\newtheorem{theorem}{Theorem}
\newtheorem{lemma}{Lemma}
\newtheorem{corollary}{Corollary}
\theoremstyle{definition}
\theoremstyle{remark}
\definecolor{HeliaColor}{RGB}{255, 105, 180}
\newcommand{\Helia}[1]{\textcolor{HeliaColor}{Helia: #1}}
\definecolor{HeliaKColor}{RGB}{255, 0, 0}
\definecolor{grassgreen}{RGB}{0, 128, 0}
\definecolor{HedyehColor}{RGB}{255, 140, 0}
\newcommand{\hbcomment}[1]{\textcolor{HedyehColor}{Hedyeh: #1}}
\definecolor{MoColor}{RGB}{30, 144, 255}
\definecolor{MohammadColor}{RGB}{70, 70, 255}
\newcommand{\Cam}[1]{\textcolor{blue}{Cameron: #1}}
\newcommand{\lowerCOSP}{0.262\xspace}
\newcommand{\lowerCOSPtau}{0.37\xspace}
\newcommand{\lowerCOSPbeta}{0.64\xspace}
\newcommand{\lowerCOSPgamma}{0.27\xspace}
\newcommand{\lowerCOSPdelta}{0.46\xspace}
\newcommand{\lowerCOSPtheta}{0.58\xspace}
\newcommand{\lowerROSP}{0.221\xspace}
\newcommand{\lowerROSPtau}{0.33\xspace}
\newcommand{\lowerROSPgamma}{0.34\xspace}
\newcommand{\lowerROSPdelta}{0.66\xspace}
\newcommand{\lowerROSPtheta}{0.63\xspace}
\newcommand{\ROSP}{ROSP\xspace}
\newcommand{\COSP}{COSP\xspace}
  \newcommand{\cAAAI}[1]{AAAI\ Conference\ on\ Artificial (AAAI)}
\title{The Secretary Problem with Predictions and a Chosen Order}
\author[1]{Helia Karisani}
\author[1]{Mohammadreza Daneshvaramoli}
\author[1]{Hedyeh Beyhaghi}
\author[1]{Mohammad Hajiesmaili}
\author[1]{Cameron Musco}
\affil[1]{University of Massachusetts Amherst\\
\texttt{\{hkarisani, mdaneshvaram, hbeyhaghi, hajiesmaili, cmusco\}@cs.umass.edu}}
\date{}
\begin{document}

\maketitle

\thispagestyle{empty} 


\begin{abstract}
We study a learning-augmented variant of the secretary problem, recently introduced by Fujii and Yoshida (2023). In this variant, the decision-maker has access to machine-learned predictions of candidate values in advance. The key challenge is to balance  \emph{consistency} and \emph{robustness}: when the predictions are accurate, the algorithm should hire a near-best secretary; however, if they are inaccurate, the algorithm should still achieve a bounded competitive ratio.

We consider both the standard \textit{Random Order Secretary Problem (ROSP)}, where candidates arrive in a uniform random order, and a more natural model in the learning-augmented setting, where the decision-maker can choose the arrival order based on the predicted candidate values.  This model, which we call the \textit{Chosen Order Secretary Problem (COSP)}, can capture scenarios such as an interview schedule that is set by the decision-maker. 

We propose a novel algorithm that applies to both \ROSP and \COSP. Building on the approach of Fujii and Yoshida, our method switches from fully trusting predictions to a threshold-based rule when a large deviation of a prediction is observed. Importantly, unlike the algorithm of Fujii and Yoshida, our algorithm uses randomization as part of its decision logic.
We show that if $\epsilon \in [0,1]$ denotes the maximum multiplicative prediction error, then for \ROSP
our algorithm achieves competitive ratio
$
\max\left\{0.221,\; \frac{1-\epsilon}{1+\epsilon}\right\},
$
improving on a previous bound of $\max\left\{0.215,\; \frac{1-\epsilon}{1+\epsilon}\right\}$ due to Fujii and Yoshida~\cite{Fujii2023}.
For \COSP, our algorithm achieves
$
\max\left\{0.262,\; \frac{1-\epsilon}{1+\epsilon}\right\}.
$
This surpasses a $0.25$ upper bound on the worst-case competitive ratio that applies to the approach of Fujii and Yoshida, and gets closer to the classical secretary benchmark of $1/e \approx 0.368$, which is an upper bound for any algorithm. 
Our result for \COSP highlights the benefit of integrating predictions with arrival-order control in
online decision-making.

\end{abstract}



\clearpage
\setcounter{page}{1}
\section{Introduction}
\label{sec:intro}


The secretary problem, popularized by Gardner in his 1960 \textit{Scientific American} column~\cite{gardner1966new}, is one of the most well-studied problems in online decision-making. 
In the classic setting, candidates arrive in a uniform random order. When a candidate arrives, the  decision-maker observes a real-valued score for the candidate, and must make an immediate, irrevocable decision to either hire or reject the candidate. The classic objective is to maximize the probability of selecting the best candidate from the sequence. 
Alternatively, one can aim to maximize the \textit{expected value} of the selected candidate --  this is the variant we will focus on. The secretary problem and its many variants have been extensively studied \cite{Karlin2015OnAC,albers2019new,babaioff2007knapsack,bradac2020robust,buchbinder2014secretary}, and applied in domains such as resource allocation~\cite{babaioff2007knapsack} and mechanism design~\cite{kleinberg2005multiple}. 

Despite this work, the underlying assumption of the secretary problem -- that the decision-maker has no information about the values of candidates before they arrive -- is often unrealistic. In practice, decision-makers often have estimates of candidate values before interviewing them. For example, in applications to hiring, these predictions may be made based on resumes, portfolios, or public reputation. In fact, machine learning methods that leverage such information to make interview and hiring decisions are widespread in industry~\cite{Antoniadis2020SecretaryAO,shen2024impact,Raghavan2020,rieke2018help,chen2023ethics,MeBa2023,Mah2019,SaDah2023,ShaNe2021,MeBa2023}. 

Given this, several recent works have studied variants of the Secretary Problem where the decision-maker has access to predictions, including estimates of the maximum value among the candidates~\cite{AntoniadisNIPS2020} or probabilistic forecasts of whether stronger candidates will appear later~\cite{benomar2023}.
In this work, we focus on a variant introduced by Fujii and Yoshida~\cite{Fujii2023}, where the decision-maker has prior access to a prediction $\hat{v}_i$ of the value $v_i$ of each candidate $i \in [n] := \{1, 2, \ldots, n\}$. They design an algorithm in this model that waits to select the candidate with the highest predicted value, unless any observed value deviates too much from its prediction; in that case, the algorithm switches to Dynkin’s classical worst-case \(1/e\)-competitive strategy~\cite{Dyn63}.

Let the \textit{competitive ratio} be defined as the worst-case ratio (over all possible assignments of candidate values) of the expected value of the candidate selected by an algorithm to the value of the true best candidate. Fujii and Yoshida~\cite{Fujii2023} prove that their algorithm achieves a competitive ratio of \(\max\left\{0.215, \frac{1 - \epsilon}{1 + \epsilon}\right\}\), where \(\epsilon = \max_{i} \left| 1 - \frac{\hat{v_i}}{v_i} \right|\) denotes the maximum multiplicative prediction error across all candidates. 
Their result  demonstrates that if predictions are accurate (i.e., \(\epsilon \to 0\)), the algorithm approaches optimal performance (i.e., the algorithm is \emph{consistent}); otherwise, it guarantees a worst-case constant-factor approximation (i.e., the algorithm is \emph{robust}). 

\subsection{Our Contributions}


We build on the approach of Fujii and Yoshida, making the following main contributions:

\smallskip

\paragraph{Chosen Order Secretary Problem (\COSP) with Predictions.} Fujii and Yoshida restrict their attention to the classic setting in which candidates arrive in a uniform random order. We call this setting the  \textit{Random Order Secretary Problem (\ROSP)}. However, when predictions of candidate values are available up front, it is natural to use them to modify the arrival order of the candidates. For example, in applications to hiring, the interview order is typically set by the decision-maker and may be chosen based on the available predictions. To capture this setting, we introduce the \textit{Chosen Order Secretary Problem (\COSP)}, in which the decision-maker has full control of the candidate arrival sequence. \COSP is analogous to existing chosen order settings for prophet inequalities, in which a distribution over each candidate's value is known up front~\cite{hill1983prophet, bubna2023prophet,peng2022order}.
 However, like the classic secretary problem, \COSP is fundamentally worst-case, with no stochastic assumptions made on the inputs.


\smallskip

\paragraph{Improved Bounds for COSP.}
We design a simple algorithm for \COSP inspired by the approach of Fujii and Yoshida~\cite{Fujii2023}. Intuitively, one might imagine that interviewing candidates in e.g., decreasing order of their predicted values, would be optimal. However, it is easy to find simple worst-case instances that defeat these related strategies. Instead, our algorithm only uses the power of \COSP in a limited way: letting $\hat i \in [n]$ be the candidate with the highest predicted value, we place $\hat i$ at a fixed position in the arrival sequence, and randomize the order of all other candidates. We optimize the position of $\hat i$ to achieve maximum competitive ratio.  

After fixing the interview order, our algorithm proceeds similarly to that of Fujii and Yoshi. It initially  operates in a prediction-trusting mode, waiting to hire $\hat i$, the candidate with the highest predicted value. If it encounters a candidate whose true value significantly deviates from its prediction (i.e., exceeds a specified error threshold), it transitions to Dynkin’s classical threshold-based algorithm~\cite{Dyn63}. This algorithm ignores a fixed fraction of candidates, and then hires the first candidate (if any) it sees that is better than all the ignored candidates. Critically, we modify the approach of Fujii and Yoshida by not always hiring $\hat i$ if it is meant to be hired in the Dynkin mode of the algorithm. We introduce two parameters, $\gamma$ and $\delta$, that determine the probability of hiring $\hat i$ in different cases, and by tuning these parameters, are able to achieve a larger competitive ratio. 


We show that our algorithm is robust and consistent, achieving a competitive ratio of at least
\[
\max\left\{\lowerCOSP,\, \frac{1-\epsilon}{1+\epsilon}\right\},
\]
recalling that $\epsilon$ is the largest multiplicative prediction error. As in the algorithm of
Fujii and Yoshida, our method does not require $\epsilon$ as input. When $\epsilon \to 0$, the
competitive ratio approaches~1, and when $\epsilon$ is large, it remains bounded below by
$\lowerCOSP$. Importantly, our \COSP result surpasses the $0.25$ upper bound that arises in the
analysis of the Fujii--Yoshida algorithm in the random-order model.\footnote{Fujii and Yoshida
prove a $0.25$ upper bound for the random order setting where the decision-maker only sees the order of candidate arrivals (not their exact random arrival times), and where the decision algorithm is deterministic.
By leveraging arrival-order control and randomization, our algorithm exceeds this barrier. See
 \cref{sec:improved-bounds} for discussion.}


\paragraph{Improved Bounds for ROSP.}
%
%
We show that our approach also applies to the classic \ROSP setting, where instead of optimizing the arrival time of the best predicted candidate $\hat i$, we simply average over a uniform random choice of this time. In this case, by optimizing the parameters $\gamma$ and $\delta$, we can show that our algorithm achieves a competitive ratio of at least
\[
\max\left\{\lowerROSP,\; \frac{1-\epsilon}{1+\epsilon}\right\},
\]
improving on the previous bound of $\max\left\{0.215,\; \frac{1-\epsilon}{1+\epsilon}\right\}$ of Fujii and Yoshida~\cite{Fujii2023}. 
As in the \COSP setting, the algorithm
does not require~$\epsilon$ as input. Our worst-case competitive ratio gets closer to an upper bound of $0.33$, which \cite{choo2024} proves for any consistent algorithm solving \ROSP. However, a substantial gap between the best
known upper and lower bounds remains.





\subsection{Technical Overview}

We now overview the key ideas behind our improved algorithms for both \COSP and \ROSP. Throughout (see \Cref{sec:prelim} for details) we assume a continuous arrival model, in which each candidate $i \in [n]$ arrives at some time $t_i \in [0,1]$. In \ROSP, $t_i$ is chosen independently and uniformly at random. In \COSP, $t_i$ is set by the algorithm.


A central requirement for consistency in the secretary problem with predictions is that the algorithm must hire the top-predicted candidate~$\hat{i}$ if they arrive before any candidate with a large deviation from its prediction has been observed. Otherwise, if all predictions end up being accurate (i.e., $\epsilon \approx 0$), the algorithm may not hire a candidate with the highest value ($v^* \approx \hat v$), and thus will not obtain a $\frac{1-\epsilon}{1+\epsilon} \approx 1$ competitive ratio. 

The algorithm of Fujii and Yoshida~\cite{Fujii2023} builds on this principle. It begins in \emph{prediction mode}, where it plans to hire~$\hat{i}$ upon arrival, ignoring all preceding candidates. However, if the algorithm encounters a candidate whose observed value deviates significantly from its prediction -- a candidate we refer to as a ``mistake'' or ``prediction error'' -- before observing~$\hat{i}$, it abandons the prediction-guided strategy and switches to \emph{secretary mode}, where it applies the classical threshold-based algorithm of Dynkin~\cite{Dyn63}. This algorithm provides a worst-case $1/e$-competitive guarantee -- the best possible in the worst-case. 

The main technical challenge in the work of Fujii and Yoshida~\cite{Fujii2023} lies in analyzing the worst-case competitive ratio of their algorithm in the large $\epsilon$ regime. Once the algorithm switches to Dynkin’s strategy, the resulting behavior depends intricately on the timing and identity of the candidate that triggered the switch, as well as the arrival time of the remaining candidates. To establish robustness guarantees, they perform a detailed case analysis that accounts for all possible positions of the top predicted and the true top candidate, as well as the arrival of all candidates with high prediction error. 
Ultimately, they show that in the large $\epsilon$ setting, their algorithm achieves a competitive ratio of $0.215$. Note that a natural upper bound here is  \(1/e \approx 0.368\) -- which would be obtained if one knew in advance that 
%
%
%
%
%
%
%
there would be a significant prediction error, and followed Dynkin’s rule from the beginning.
Moreover, in the random-order model, any algorithm that is consistent—i.e., achieves a competitive ratio approaching $1$ when predictions are accurate—cannot have robustness exceeding $0.33$; this follows from the upper bound of~\cite{choo2024} for ROSP.
This bound applies to consistent algorithms operating under random arrival, but whether an analogous limitation holds for \COSP\ is currently unknown.


\paragraph{Fixing the arrival of $\hat i$.}
Interestingly, in the analysis by Fujii and Yoshida, the arrival position of the top-predicted candidate~$\hat{i}$ has a significant impact on the competitive ratio. This observation motivates our focus on \COSP, and on strategically fixing the arrival time of $\hat i$ in particular. 
To illustrate this, suppose we fix $\hat i$'s arrival time to \(\beta \approx 0\) and keep all other arrival times uniformly random. The adversary can set all predictions correctly except for the true best candidate $i^*$, which is underestimated. With probability \(\beta\), this candidate arrives before \(\hat{i}\), triggers a switch to secretary mode, and is skipped. With probability \(1 - \beta\), the algorithm hires \(\hat{i}\) before ever seeing $i^*$. In either case, the competitive ratio approaches zero. On the other hand, if \(\beta \approx 1\), the adversary may overestimate \(\hat{i}\), causing the algorithm to skip all earlier candidates in favor of \(\hat{i}\), which results in success only with probability \(1 - \beta\) -- again close to zero. Therefore, setting \(\beta = 0\) or \(\beta = 1\) is provably suboptimal, and we instead optimize over \(\beta\) to find a value that balances these trade-offs and maximizes the competitive ratio.
%
We note that placing a distribution over the arrival time \(\beta\) of \(\hat{i}\), rather than fixing it deterministically, may further improve performance. However, we leave this possibility for future work. 
We conjecture that controlling the arrival times of candidates other than $\hat i$ (rather than simply randomizing them) cannot significantly improve our bounds -- however, we again leave investigating this question for future work.

\smallskip

\paragraph{Tuning the probability of hiring $\hat i$.}
Unfortunately, simply optimizing the arrival time of $\hat i$ and applying the algorithm of Fujii and Yoshida~\cite{Fujii2023} directly can fail. Recall that Dynkin's worst-case algorithm ignores all candidates before some threshold time $\tau$, and then hires the first candidate that arrives after $\tau$ and is better than all preceding candidates. If we set \(\beta < \tau\) and the top predicted candidate \(\hat{i}\) is in fact the true best candidate, but its prediction deviates significantly from its value, the algorithm switches to secretary mode upon its arrival and rejects \(\hat{i}\) since it arrives before $\tau$. This can result in a near zero competitive ratio. Conversely, if we fix \(\beta > \tau\), consider a scenario with no deviation between the predictions and the true values, except that \(\hat{i}\) is predicted to be the best candidate but is actually the second-best candidate (and by a wide margin). If the true best candidate $i^*$ arrives before \(\beta\), it will be ignored since the algorithm remains in prediction mode. If it arrives after \(\beta\), the algorithm will have already selected \(\hat{i}\) as it would be the best candidate seen after $\tau$ and after switching to secretary mode. Again, we have a competitive ratio of near zero.

The handle the above issues, we set $\beta > \tau$, but modify the algorithm to not always hire \(\hat{i}\) when it is the first mistake and the best candidate observed so far. Instead, we introduce a parameter \(\gamma\) that specifies the probability of hiring \(\hat{i}\) in this case. This change ensures that the algorithm retains flexibility in handling situations where blindly trusting \(\hat{i}\) could lead to poor outcomes. Optimizing \(\gamma\) allows us to avoid deterministic failure while still preserving good performance in other cases.

We introduce a second parameter \(\delta\) to handle the case where \(\hat{i}\) is not the first observed mistake. It does not make sense to use the same hiring probability \(\gamma\) in this setting, as the context is different: the algorithm has already seen one or more mistakes and switched to secretary mode before seeing $\hat i$. The likelihood that \(\hat{i}\) is truly optimal may be lower or higher than the case when it is the first mistake 
, so we allow a separate hiring probability \(\delta\) This additional degree of freedom enables us to optimize the algorithm's behavior even further.


As in the work of Fujii and Yoshida, a key technical challenge is to analyze the competitive ratio
against all possible adversarial inputs. This requires accounting for every configuration of the true
best candidate $i^*$, the top-predicted candidate $\hat{i}$, and the locations of all large-deviation
candidates. As in their analysis, this leads to a detailed case decomposition. Following their
notation, let $M$ be the set of large-deviation candidates, and let $m = |M|$. We further refine the structure of the high-deviation set $M$ by comparing the
true values of its members to the true value of the best–predicted candidate~$\hat{i}$.
Specifically, we define
\[
m_2 \;=\; \bigl|\{\, i \in M : v_i < v_{\hat{i}} \,\}\bigr|,
\qquad
m_1 \;=\; m - m_2,
\]
so $m_2$ counts the number of ``mistakes’’ whose true value is \emph{smaller} than
the true value of~$\hat{i}$.  
Although both quantities are well defined, only $m_2$ plays a role in our
competitive-ratio bounds.
In addition, we define
\[
k \;=\; \bigl|\{\, i \in [n] : v_i > v_{\hat{i}} \,\}\bigr|,
\]
which is the number of candidates whose true value exceeds that of the
best–predicted candidate~$\hat{i}$.  
The parameter $k$ captures how many
truly better candidates the adversary places relative to~$\hat{i}$, and therefore
directly influences the worst-case behavior of the algorithm in prediction mode.

These parameters allow us to
partition the analysis into fine-grained subcases, each capturing a distinct adversarial structure.
We then optimize over $\beta$, $\gamma$, and $\delta$ to obtain the worst-case competitive ratio.

Despite the added generality, the algorithm retains a simple structure amenable to analysis. By carefully organizing the cases based on the parameters, we are able to compute the worst-case competitive ratio numerically. Through this analysis, we obtain a provable lower bound of $\max\left\{\lowerCOSP, \frac{1-\epsilon}{1+\epsilon} \right\}$ for \COSP, demonstrating the benefit of tuning \(\beta\), \(\gamma\), and \(\delta\).

\paragraph{Extension to \ROSP.}
In the \COSP\ setting, for each fixed value of $\beta$, we evaluate the competitive ratio across all
adversarial configurations and then optimize the parameters $\gamma$ and $\delta$ for that fixed
$\beta$. We then optimize over $\beta$ to obtain the final \COSP\ bound.

In the \ROSP\ setting, however, the arrival time of $\hat{i}$ is itself a uniform random variable in
$[0,1]$, and cannot be chosen by the algorithm. Accordingly, the analysis must first take the
expectation of the competitive ratio over $\beta \sim \mathrm{Unif}[0,1]$. We then optimize the
parameters $\gamma$ and $\delta$ with respect to this new competitive ratio objective. This yields a lower bound
of
\[
\max\left\{ \lowerROSP,\; \frac{1 - \epsilon}{1 + \epsilon} \right\},
\]
slightly improving on the prior bound of~\cite{Fujii2023}.

\paragraph{Why only choose arrival of $\hat{i}$ and the decision at arrival time?}
One natural question is whether introducing additional parameters—for
example, tuning the algorithm's behavior when encountering the
second- or third-highest predicted candidates—could further improve the
robustness.  Our analysis, as well as the structure of~\cite{Fujii2023},
suggests that this is unlikely: in all adversarial instances that
determine the worst-case competitive ratio, the only critical decision
point is when the algorithm reaches $\hat{i}$ or first mistake(deviated prediction) while still following the
classical secretary rule.  Lower-predicted candidates never influence these worst-case configurations in a meaningful way unless they are mistakes (deviated predictions).  By contrast, in extensions of the
secretary problem where the goal is to select the top $k>1$ individuals or
where multiple high-valued candidates may be accepted, controlling the
arrival times of several top-predicted candidates could indeed be
beneficial.  Developing such multi-candidate scheduling strategies is an interesting direction for future work.

\subsection{Related works}

\smallskip

\paragraph{Classical and Variant Secretary Problems.}
The classical secretary problem, popularized by Gardner~\cite{gardner1966new} and others~\cite{Dyn63,gilbert1966recognizing,lindley1961dynamic,esfandiari2020prophet,Liu2021QueryBasedSO,kaplan2020competitive,correa2021secretary,Salem2023,Correa2021,Azar2018}, has inspired a broad and influential line of work in online algorithms, economics, and decision theory. Numerous extensions have been studied, including multiple-choice variants~\cite{kleinberg2005multiple}, matroid and knapsack constraints~\cite{babaioff2007matroids}, online matching and graphic/transversal matroids~\cite{korula2009algorithms}, submodular objectives under cardinality and packing constraints~\cite{kesselheim2016submodular}, and structured feasibility systems such as packing integer programs~\cite{argue2021robust}. We refer the reader to the survey by Gupta and Singla~\cite{gupta2020randomorder} for a broader overview of results in random-order models. 


\smallskip

\paragraph{Secretary Problems with Predictions.}
Recent work incorporates predictions into the secretary problem to enable improved performance. In the model that we adopt, each candidate comes with a predicted value $\hat{v}_i$ estimating their true value $v_i$, as in~\cite{Fujii2023,choo2024}. Other forms of predictive feedback include binary comparisons~\cite{benomar2023}, probabilistic signals about candidate quality~\cite{Moustakides2023,dutting2021secretaries}, single-shot predictions of the best candidate~\cite{AntoniadisNIPS2020}, and predictions of the value difference between the best and the k-th best candidate~\cite{braun2024predicted}. Some work has studied algorithms that ensure fairness by guaranteeing the best candidate is accepted with constant probability, even under biased predictions~\cite{balkanski2024fair}.
Additionally, Amanatidis et al.~\cite{amanatidis2025online} study value maximization with predictions in an online budget-feasible mechanism design setting, using techniques related to the secretary problem. 

\paragraph{Prophet Inequalities and Order Selection.}
Prophet inequalities are a classical model in optimal stopping theory that can be seen as a stochastic version of the secretary problem. A decision-maker observes a sequence of values drawn from known distributions and aims to compete with a benchmark that selects the maximum in hindsight. The classical result guarantees a $1/2$ competitive ratio~\cite{krengel1978semiamarts,samuel1984comparison}, which has inspired a rich body of work in theoretical computer science and economics; see surveys such as~\cite{hill1992survey,lucier2017economic,correa2019recent}. Numerous variants have been studied, including the adversarial order model~\cite{krengel1978semiamarts}, the random order (prophet secretary) model~\cite{esfandiari2015prophet}, and the order-selection (free-order) model~\cite{hill1983prophet}. This model is especially related to our Chosen Order Secretary Problem (COSP), in which the arrival order is determined by predicted candidate values. Relevant works in the order-selection model include~\cite{chawla2010multi,beyhaghi2021improved,peng2022order,bubna2023prophet}.
Intuitively, prophet inequalities capture similar scenarios in which the decision-maker has in advance information about candidate values; however, they operate under stochastic assumptions (values drawn from known distributions), whereas our setting is fully worst-case, with no distributional assumptions. 
Interestingly, until recently, it was unclear whether the order-selection and random-arrival models for profit inequalities differed in their achievable competitive ratios. A separation has now been established for prophet inequalities~\cite{bubna2023prophet}, but in our setting, the relationship remains open: the best upper bound for the random-arrival variant (\ROSP) currently exceeds the best known competitive ratio for order selection (\COSP).

\paragraph{Learning-Augmented and Online Algorithms with Predictions.}
A growing literature studies learning-augmented online algorithms in which predictions guide online decisions without compromising worst-case robustness. Notable examples include caching~\cite{Lykouris2018,Rohatgi2020,Wei2020approx}, ski rental~\cite{Purohit2018,Gollapudi2019,WeiZhang2020,diakonikolas2021distributional}, online scheduling~\cite{Lattanzi2020}, and metrical task systems~\cite{Antoniadis2020}. Learning-augmented approaches have also been applied in online primal-dual methods~\cite{Bamas2020b}, speed scaling~\cite{Bamas2020a}, online bidding~\cite{aggarwal2025autobidding}, and auctions~\cite{gkatzelis2024clock}. We refer the reader to surveys such as~\cite{Mitzenmacher2020} for comprehensive overviews. Moreover, learning-augmented approaches have also been applied to online purchasing-type problems with prediction, such as one-way trading, online conversion~\cite{alBin97,Sun21,Lec24}, and online knapsack problems with prediction~\cite{ImRavi2021,gehnen2024,Boyar24,Daneshvaramoli2025}. These problems often share structural similarities with secretary problems—e.g., online knapsack with unit-size items and unit capacity resembles the secretary problem under random arrival.


\iftrue

\subsection{Outline of the Paper}

This paper is organized as follows. In \cref{sec:prelim}, we give preliminaries and formally define our problem settings and the notation. In \cref{sec:tbgd-method}, we study the Chosen Order Secretary Problem (\COSP), presenting a randomized algorithm that achieves competitive ratio $\max\{\lowerCOSP, \frac{1 - \epsilon}{1 + \epsilon}\}$. In \cref{sec:rand}, we study the Random Order Secretary Problem (\ROSP), extending our algorithm for \COSP to obtain a competitive ratio of $\max\{\lowerROSP, \frac{1 - \epsilon}{1 + \epsilon}\}$ in this setting. Finally, in \cref{sec:improved-bounds}, we examine the relationship between deterministic and randomized algorithms in the classic secretary problem with predictions, showing that under the standard value-maximization formulation, both models achieve the same competitive ratio.

\section{Preliminaries}
\label{sec:prelim}

We let \( N= \{1, \ldots, n\} \) denote the set of candidates. 
Throughout, we adopt a continuous-time framework, where each candidate~$i$ is assigned an arrival time~$t_i \in [0,1]$. 
In \ROSP, each $t_i$ is drawn independently from \(\mathrm{Unif}(0,1) \)~\cite{kleinberg2005multiple,bruss1984unified,feldman2011improved}. This induces an arrival order that is a uniform random permutation. In \COSP, the decision-maker selects the arrival time $t_i$ of each candidate. 
Note that in \ROSP, for any subinterval~$A \subseteq [0,1]$, the probability that candidate~$i$ arrives within~$A$ is equal to its length. 

Because all arrival times are distinct with probability 1, the continuous-time model is equivalent to the related model where candidates are just assigned some arrival position in $[n]$ -- in that model, the decision-maker could always draw a set of random arrival times to assign to the candidates based on their arrival order. 

Each candidate \( i \in [n] \) has a true value \( v_i \in \mathbb{R}_{\geq 0} \), which remains unknown to the decision-maker until the candidate appears at time \( t_i \). Upon observing each candidate, the decision-maker must irrevocably choose whether to select or reject them; once a candidate is selected, the process terminates.
We evaluate performance under the value-maximization objective: maximizing the expected value of the selected candidate, where the expectation is taken over any randomness in the arrival order and any randomness used by the selection algorithm. As defined before, the performance measurement tool is competitive ratio, the worst-case ratio (over all possible assignments of candidate values) of the expected reward of the algorithm to the maximum true value, i.e., \( \mathbb{E}[\text{ALG}] / v^* \), 
where $\mathbb E[\text{ALG}]$ denotes the expected reward achieved by the algorithm and \( v^* = \max_{i \in [n]} v_i \) is the highest true value among all candidates, corresponding to candidate $i^*$ arrived at time $t^*$. 

 In addition to the true values \( v_i \), a prediction \( \hat{v_i} \in \mathbb{R}_{\geq 0} \) is provided in advance for each candidate \( i \in [n] \), representing an estimate of \( v_i \).  We let \(\hat{i}\) denote the candidate with the highest predicted value arriving at time \( \hat{t} \). Let \( \epsilon = \max_{i \in [n]} \left| 1 - \frac{\hat{v}_i}{v_i} \right| \) denote the largest 
multiplicative prediction error among all candidates. We will often make use of the following error guarantee on predictions:

\begin{equation}
    \label{eq:pred-bound}
    (1 - \epsilon)v_i \leq \hat{v_i} \leq (1 + \epsilon)v_i \quad \text{for all } i \in [n].
\end{equation}


Without loss of generality, we assume uniqueness for both the predicted values and the true values; hence, there is a uniquely defined top candidate $i^*$ and top predicted candidate $\hat i$. 
To handle cases with repeated values, we can slightly perturb the values to break ties. In particular, when the same true value $v_i$ appears more than once in the sequence, we treat each subsequent occurrence as slightly larger than the previous one by adding an infinitesimal offset (e.g., \(v \mapsto v + \eta\) with \(\eta > 0\) arbitrarily small). This guarantees that all values are distinct, while ensuring that the modified values remain arbitrarily close to the original ones. The algorithm is then competitive with respect to this perturbed sequence, and since the perturbed values differ from the true values by at most an infinitesimal factor, the resulting competitiveness carries over to the original sequence. We can use a similar approach for the predicted values $\hat v_i$.

\section{Chosen Order Secretary Problem (COSP) with Prediction}
\label{sec:tbgd-method}

In this section, we analyze the competitive ratio of our algorithm. We begin by providing intuition for the algorithm, then define the model and set up the key parameters. We then analyze the competitive ratio under a fixed arrival time \(\beta\) of the top predicted candidate.

\subsection{Algorithm Intuition}


We describe our algorithm for the COSP setting, where the arrival order is partially controlled based on predictions. Specifically, the candidate with the highest predicted value---denoted by $\hat{i} = \arg\max_i \hat{v}_i$---is scheduled to arrive at a fixed time $\beta \in [0,1]$, i.e., we set $\hat{t} = \beta$, while all other candidates are assigned independent arrival times drawn uniformly from $[0,1]$.

The algorithm operates in two modes: \textbf{Prediction mode} and \textbf{Secretary mode}. It begins in Prediction mode and switches to Secretary mode upon observing a candidate \( i \) such that the prediction error satisfies \( \left|1 - \frac{\hat{v}_i}{v_i} \right| \ge \theta \), indicating that the predictions is unreliable (Line~\getrefnumber{line:switch} of \Cref{alg:cosp}).
Here, $\theta$ specifies the maximum acceptable prediction error. 
The algorithm seeks to allow as many predictions as possible to be considered reliable, 
while still guaranteeing consistency whenever all errors are within $\theta$.
In our algorithm, if a mode switch occurs, the algorithm transitions to Secretary mode at the time denoted in this paper as \( t_M \). It then follows a variant of Dynkin’s rule, the well-known $1/e$-competitive strategy that 
skips all candidates arriving before a threshold time~$\tau$ and hires the first candidate thereafter 
whose value exceeds all previously observed ones.

Our algorithm skips all candidates arriving before a time threshold \( \tau \), and selects the first candidate thereafter with a value higher than 
all previously observed ones. Special care is taken if \( \hat{i} \) arrives in Secretary mode: it is hired with probability \( \gamma \) if it triggers the mode switch (i.e., arrives at time \( t_M \)), and with probability \( \delta \) if it arrives after the mode switched. These randomized rules ensure robustness, avoiding over-reliance on potentially incorrect predictions while still preserving high performance when predictions are accurate. Moreover, when \(\theta\) is large enough, the algorithm is less likely to switch to secretary mode and more likely to hire \(\hat{i}\), relying on prediction and hiring the highest predicted value candidate \(\hat{i}\).

Later, we show that with carefully chosen parameters, our algorithm for COSP setting achieves a competitive ratio of at least
\begin{equation*}
 \max\left\{ \lowerCOSP, \frac{1 - \epsilon}{1 + \epsilon} \right\}
\end{equation*}
across all instances with maximum multiplicative prediction error \( \epsilon \).

\begin{algorithm}[t]
\caption{Prediction-Aware Secretary Algorithm for COSP }
\label{alg:cosp}
\begin{algorithmic}[1]
\Require 
    Predicted values \( \hat{v}_i \) for all \( i \in [n] \); 
    threshold parameter \( \theta \in [0,1] \), \( \tau \in [0,1] \); fixed arrival time \(\beta\) of the top predicted candidate; probabilities \( \gamma, \delta \in [0,1]: \)  
\State \( \hat{i} \gets \arg\max_{i \in [n]} \hat{v}_i \) \Comment{Top predicted candidate}
\State Assign \( t_{\hat{i}} \gets \beta \); assign \( t_i \sim \text{Unif}(0,1) \) for all \( i \neq \hat{i} \)
\State \( \text{mode} \gets \texttt{Prediction} \)
\For{each candidate \( i \in [n] \) in increasing order of arrival time \( t_i \)}
    \If{\( \left|1 - \hat{v}_i / v_i \right| > \theta \) \textbf{and} \( \text{mode} = \texttt{Prediction} \)} \label{line:switch}
        \State \( \text{mode} \gets \texttt{Secretary} \); \( t_M \gets t_i \) \Comment{First large deviation triggers switch}
    \EndIf
    \If{\( \text{mode} = \texttt{Prediction} \) \textbf{and} \( i = \hat{i} \)}
        \State \Return hire candidate \( i \)
    \ElsIf{\( \text{mode} = \texttt{Secretary} \) \textbf{and} \( t_i > \tau \) \textbf{and} \( v_i > \max\{v_j : t_j < t_i\} \)} 
        \If{\( i = \hat{i} \) \textbf{and} \( t_i = t_M \)} \label{line:gamma}
            \State Hire \( i \) with probability \( \gamma \)
        \ElsIf{\( i = \hat{i} \)} \label{line:delta}
            \State Hire \( i \) with probability \( \delta \)
        \Else 
            \State Hire candidate \( i \)\label{line:normal}
        \EndIf
    \EndIf
\EndFor
\end{algorithmic}
\end{algorithm}

\subsection{Competitive Ratio Guarantee}

In this subsection, we formally analyze the performance of our algorithm by establishing a lower bound on its competitive ratio. Specifically, we divide the analysis into multiple cases and prove that the algorithm achieves a worst-case competitive ratio of at least \(\max\left\{0.262, \frac{1 - \epsilon}{1 + \epsilon} \right\}\) for the Chosen Order Secretary Problem (\COSP). 

\begin{theorem}\label{thm:bgd-competitive}
    ~\Cref{alg:cosp} with parameters $\theta = \lowerCOSPtheta$, $\tau = \lowerCOSPtau$, $\beta = \lowerCOSPbeta$, $\gamma = \lowerCOSPgamma$, and $\delta = \lowerCOSPdelta$  is at least $\max\left\{\lowerCOSP, \frac{1 - \epsilon}{1 + \epsilon}\right\}$-competitive for the Chosen Order Secretary Problem (\COSP). 
\end{theorem}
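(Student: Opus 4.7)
The plan is to separately establish the two terms in the $\max$. For the consistency direction $v_{\hat i}/v^* \geq (1-\epsilon)/(1+\epsilon)$, I would restrict attention to the regime $\epsilon \leq \theta$, where every candidate satisfies $|1-\hat v_i/v_i|\leq \epsilon \leq \theta$, so the switch condition on Line~\getrefnumber{line:switch} never fires and the algorithm deterministically hires $\hat i$ at time $\beta$. Combining $\hat v_{\hat i}\geq \hat v_{i^*}$ (definition of $\hat i$) with the two-sided prediction bounds in~\eqref{eq:pred-bound} immediately gives $v_{\hat i}/v^* \geq (1-\epsilon)/(1+\epsilon)$, matching the claim. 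For $\epsilon > \theta$, the quantity $(1-\epsilon)/(1+\epsilon)$ drops below $\lowerCOSP$ and the robustness bound takes over.

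For the robustness direction, I would adapt the framework of Fujii--Yoshida~\cite{Fujii2023} to the new degrees of freedom $\beta$, $\gamma$, and $\delta$. I would parametrize an adversarial instance by the mistake set $M = \{i : |1-\hat v_i/v_i|>\theta\}$, by $k = |\{i : v_i > v_{\hat i}\}|$, and by $m_2 = |\{i\in M : v_i < v_{\hat i}\}|$, then split into three top-level cases: (i) $M = \emptyset$, in which the algorithm hires $\hat i$ at $\beta$ and the prediction bound compares $v_{\hat i}$ to $v^*$; (ii) $\hat i \notin M$, in which some other mistake arrives at time $t_M$ and triggers a switch to Secretary mode, requiring a sub-split on whether $t_M < \beta$ or $t_M > \beta$ and a Dynkin-style analysis of the probability that the first above-running-max candidate after $\tau$ is (or is comparable to) $i^*$; (iii) $\hat i \in M$, in which either $\hat i$ itself is the first mistake at $\beta$ (and is hired with probability $\gamma$ via Line~\getrefnumber{line:gamma}) or an earlier mistake switches the mode and $\hat i$ is later considered with probability $\delta$ via Line~\getrefnumber{line:delta}. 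Within each case I would condition on the positions of $\hat i$, $i^*$, and the triggering mistake relative to the thresholds $\tau$ and $\beta$, and integrate out the uniform arrival times of the $n-1$ non-$\hat i$ candidates.

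A key simplification in every sub-case is that, whenever the algorithm selects a candidate $i \notin M$, the prediction bound lets us replace $v_i$ by a multiple of $\hat v_i$, and in particular by a multiple of $v_{\hat i}$ whenever $\hat v_i \leq \hat v_{\hat i}$. This reduces each contribution to a bound of the form $(\text{probability of selection})\cdot\min(v_{\hat i},v^*)/v^*$, and by the prediction bound this ratio is itself at least $(1-\theta)/(1+\theta)$ whenever $i^*\notin M$. Thus the hard sub-cases are those where $i^*\in M$, and the adversary can place $v^*$ arbitrarily far above the rest; there the analysis must show that the probability of actually selecting $i^*$ (via Dynkin after the switch, or via $\hat i = i^*$ in Prediction mode) is at least $\lowerCOSP$.

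The main obstacle is the detailed integration and optimization in case (iii), which couples $\beta$, $\gamma$, and $\delta$ simultaneously, and the min-max optimization over the adversary's choice of $k$ and $m_2$. For each sub-case the contribution to $\mathbb{E}[\text{ALG}]/v^*$ reduces to a piecewise-polynomial function of $(\theta,\tau,\beta,\gamma,\delta)$, and one must verify that the specified setting $(0.58,0.37,0.64,0.27,0.46)$ simultaneously keeps every sub-case above $\lowerCOSP$. I expect the binding adversarial instances to be: (a) $\hat i = i^*$ but with multiplicative error just exceeding $\theta$, so $\hat i$ itself triggers the switch at $\beta$ and only $\gamma$ protects against total loss; and (b) an instance in which $\hat i$ is second-best by a wide margin, $i^*$ arrives in $(\tau,\beta)$, and another mistake switches the mode before $\beta$, testing $\delta$ against the Dynkin selection probability for $i^*$. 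Certifying that the worst-case ratio across all sub-cases bottoms out at exactly $\lowerCOSP$ will require either a careful closed-form minimization of these piecewise expressions or a numerical sweep over the active sub-cases, which I view as the principal technical burden of the argument.
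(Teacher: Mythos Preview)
Your overall plan matches the paper's approach: a case analysis based on the mistake set $M$, parametrized by $m=|M|$, $k$, and $m_2$, with explicit closed-form lower bounds in each case, followed by a finite numerical enumeration for small parameter values together with symbolic tail bounds for large ones. Your treatment of the consistency direction and your recognition that the final step is a numerical sweep are both correct and in line with the paper.

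However, your ``key simplification'' paragraph contains a genuine gap. You claim that whenever the algorithm selects some $i\notin M$, the prediction bound lets you pass to $\min(v_{\hat i},v^*)/v^*\ge(1-\theta)/(1+\theta)$, so the only hard sub-cases are those with $i^*\in M$. This does not work: from $\hat v_i\le\hat v_{\hat i}$ and \eqref{eq:pred-bound} you get only an \emph{upper} bound $v_i\le v_{\hat i}(1+\theta)/(1-\theta)$, not a useful lower bound on $v_i/v^*$. In particular, the paper's Case~4 ($\hat i\ne i^*$, $\hat i\in M$, $i^*\notin M$) is one of the most involved even though $i^*\notin M$: once in secretary mode the algorithm may select a low-value candidate, and the analysis must directly lower-bound $\Pr[\text{select }i^*]$ via the Dynkin-style integrals in \eqref{eq:c41q}--\eqref{eq:c43}. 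The paper never uses your shortcut; in every nontrivial case it bounds the probability of hiring $i^*$ and pessimistically treats all other selections as worthless (the sole exception being the prediction-mode contribution $(1-\beta)^m\cdot\tfrac{1-\epsilon}{1+\epsilon}$ in Case~6).

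Relatedly, your three-way top-level split is too coarse. The paper uses seven cases, and the distinction $\hat i=i^*$ versus $\hat i\ne i^*$ is essential: when $\hat i=i^*\in M$ (Case~1), the parameters $\gamma,\delta$ directly control success, whereas when $\hat i\ne i^*$ with $\hat i\in M$ (Cases~3--4), $\gamma,\delta$ govern the probability of \emph{failing} by hiring $\hat i$ instead of waiting for $i^*$, and must be traded off against the Dynkin probability. Your case~(iii) conflates these opposing regimes. The paper also relies on reductions (Case~2 to Case~1 with $m\mapsto m{+}1$, Case~3 to Case~4 with $m\mapsto m{-}1$) that require the finer split to state.
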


\begin{proof}
Following the case-based analysis of Fujii and Yoshida~\cite{Fujii2023}, we divide the proof into seven cases, depending on the structure of the high-deviation set $M$, which contains all candidates whose predictions deviate from their actual values by more than the threshold $\theta$ used to switch from Prediction to Secretary mode in Line~\getrefnumber{line:switch} of \Cref{alg:cosp} 
. Formally, we define this set as:
\begin{equation*}
  M = \left\{ i \in [n] : \left|1 - \frac{\hat{v}_i}{v_i} \right| > \theta \right\}.  
\end{equation*}
Our cases will consider whether the candidate with the highest predicted value \( \hat{i} \) and/or the candidate with the highest true value \( i^* \) belong to $M$. Also, we define $i_M$ as the first candidate's index to be seen that falls in $M$ and we denote its arrival time as $t_M$.
We also let \({ k = \left| \{ i \in [n] \mid v_i > v_{\hat{i}} \} \right| }\) denote the number of candidates with true value greater than \( \hat{i} \), and let \( m_2 = \left| \{ i \in M \mid v_i < v_{\hat{i}} \} \right| \) denote the number of candidates in set $M$ with true value less than \( \hat{i} \).

Throughout the analysis, we discuss the regime where the fixed arrival time $\beta$ of $\hat{i}$ 
satisfies $\beta > \tau$, as this aligns with parameters obtained in our 
optimization over parameters $\tau,\beta,\gamma,\delta$ (~\cref{thm:bgd-competitive}). This focus simplifies 
the presentation, but both conditions $\beta > \tau$ and $\beta < \tau$ are fully analyzed in the 
next section to establish guarantees across all parameter choices.
We denote the worst-case competitive ratio achieved in each case \( j \in \{0, 1, \ldots, 6\} \) by \( C_j(\cdot) \) respect to any input set by the adversary, where \( C_j(\cdot) \) is a function of parameters set by the adversary (i.e., \(m\), \(k\), \(v_i\), and \(\hat{v}_i\)).
The case ordering follows the structure introduced in~\cite{Fujii2023}.

We will formally verify that our algorithm achieves a competitive ratio of at least \(\max\left\{ \lowerCOSP, \frac{1 - \epsilon}{1 + \epsilon} \right\}\) across all cases. For clarity of presentation, we defer the technical lemmas used in the analysis to the appendix, while the main text focuses on the case-by-case argument structure.

To organize the argument, we present each of the seven cases as a separate part, analyzing the conditions under which the algorithm hires a candidate and deriving a lower bound on the competitive ratio in each scenario.


  \paragraph{\textbf{Case 0: \(\boldsymbol{|M| = 0}\).}} \label{par:case0}
In this case, all predictions fall within the acceptable error threshold 
(i.e., $\left|1 - \frac{\hat{v}_i}{v_i} \right| < \theta$), so the high-deviation set $M$ is empty, 
and the algorithm remains in prediction mode throughout. 

This implies that the maximum multiplicative prediction error satisfies 
$\epsilon \leq \theta$, and by the Multiplicative Error Guarantee 
(Equation~\eqref{eq:pred-bound}), all predicted values remain within 
the $(1 \pm \epsilon)$ bound. 

Since the algorithm selects $\hat{i} = \arg\max_i \hat{v}_i$, it follows that
\begin{equation}
\label{eq:case0}
v_{\hat{i}} \;\geq\; \frac{\hat{v}_{\hat{i}}}{1 + \epsilon} \;\geq\; \frac{\hat{v}_{i^*}}{1 + \epsilon} ,
\end{equation}
and moreover,
\begin{equation}
\label{eq:case01}
v_{\hat{i}} \;\geq\; \frac{\hat{v}_{i^*}}{1 + \epsilon} 
\;\geq\; \frac{(1 - \epsilon)v_{i^*}}{1 + \epsilon}.
\end{equation}

Together, these inequalities imply a competitive ratio of at least 
$\tfrac{1 - \epsilon}{1 + \epsilon}$. Since $\epsilon \leq \theta = 0.58$, we have
\begin{equation}
\frac{1 - \epsilon}{1 + \epsilon} \geq \frac{1 - 0.58}{1 + 0.58}  >  \lowerCOSP .
\end{equation}

Thus, in this case the algorithm achieves a competitive ratio of 
\(\tfrac{1 - \epsilon}{1 + \epsilon}\), which is at least 
\(\max\!\left\{ \lowerCOSP, \tfrac{1 - \epsilon}{1 + \epsilon} \right\}\), as claimed, since \(\epsilon \leq \theta\).

  \paragraph{\textbf{Case 1: \(\boldsymbol{\hat{i} = i^* \in M}\).}} \label{par:case1}
Here, the top predicted candidate \( \hat{i} \) is also the true optimum \( i^* \), but it lies in the high-deviation set \( M \). The algorithm switches to secretary mode upon encountering the first candidate in \( M \), which may or may not be \( i^* \) itself.

We analyze two subcases, depending on whether \( i^* \) triggers the mode switch or not:

\noindent \emph{Case (1.1) \( \beta = t_M \)}: That is, \(i^*\) is the first candidate in \( M \), and it arrives at time \( \beta \).
The probability of this subcase occurring is equal to the probability that all other candidates in \(M\) arrive after \(\beta\), which is \((1 - \beta)^{m - 1}\). Also, the probability of successfully selecting \(i^*\) is \(\gamma\) since \(i^*\) is the first observed error. Thus, the contribution of subcase \emph{Case(1.1)} to the competitive ratio of this case is \(\gamma (1-\beta)^{m-1}\): the event that \(i^*\) is the first observed member of $M$ with probability \((1-\beta)^{m-1}\), and conditioned on this event the algorithm selects it with probability \(\gamma\) (~\cref{alg:cosp}, Line~11).


\noindent\emph{Case(1.2) \( \beta \neq t_M \)}: Here, some other candidate in \( M \) arrived before $\hat{i}$, so we are already in Secretary mode by the time \( i^* \) arrives at time \( \beta > \tau \).
The probability of this subcase happening is equal to \(1-(1 - \beta)^{m - 1}\) which is at least one candidate in $M$ arrives before $\hat{i}$.
Now, assuming we are in this subcase, in order to reach $\hat{i}$ (to accept it), the algorithm must not accept anyone before $\hat{i}$. It is sufficient (but not necessary) that the best candidate in the interval \([0, \beta)\) came before $\tau$, as then no one would be hired before it. The probability of this scenario is \( \tau / \beta \). Furthermore, the Algorithm accepts $\hat{i}$ with probability $\delta$ according to Line~\getrefnumber{line:delta} of \Cref{alg:cosp}.
It is worth mentioning that the arrival of the best candidate in $[0, \beta)$ before $\tau$ is independent of whether a candidate in $M$ arrives before $\hat{i}$, since candidates in $M$ arriving before $\beta$ are also uniformly distributed in $[0, \beta)$. 



Combining both subcases, the competitive ratio in Case 1 for any  $m$ is lower-bounded by:
\begin{equation}
\label{eq:case1}
    C_1(m,\cdot) \geq \frac{\tau}{\beta} \cdot \delta \cdot \left(1 - (1 - \beta)^{m - 1} \right)
    + \gamma \cdot (1 - \beta)^{m - 1}.
\end{equation}

  \paragraph{\textbf{Case 2: \(\boldsymbol{\hat{i} = i^* \notin M}\).}} \label{par:case2}
Under this condition, the top predicted candidate \( \hat{i} \) is also the true optimum \( i^* \), and its prediction is accurate enough that it does not belong to the high-deviation set \( M \). The algorithm may or may not have entered secretary mode before observing \( i^* \), depending on whether another candidate in $M$ has been seen earlier.

To obtain a pessimistic lower bound in this case, we assume that \( i^* \in M \), even though by the conditions of case 2 we know \( i^* \notin M \). This assumption decreases the chance of hiring \( i^* \). 
 If the algorithm is already in secretary mode when it sees \( i^* \), then whether or not \( i^* \in M \) has no effect---the selection behavior and probabilities are the same (see Line~\getrefnumber{line:delta} of \Cref{alg:cosp}). 
However, if the algorithm is still in prediction mode when \( i^* \) arrives, then: (1) If \( i^* \notin M \), the algorithm selects it deterministically. (2) If \( i^* \in M \), the algorithm does not select it immediately and instead switches to secretary mode, where it will only be selected with probability \( \gamma \leq 1 \). Thus, the pessimistic assumption \( i^* \in M \) reduces the chance of hiring \( i^* \), and we conclude the result is always better than \textit{Case~1}, presented above, with one extra candidate in \( M \):
\begin{equation}
    C_2(m,\cdot) \geq C_1(m + 1,\cdot).
\end{equation}

  \paragraph{\textbf{Case 3: \(\boldsymbol{\hat{i} \neq i^*,~ \hat{i} \in M,~ i^* \in M}\).}} \label{par:case3}
In this situation, both the top predicted candidate \( \hat{i} \) and the true optimum \( i^* \) belong to the high-deviation set \( M \). 
We lower bound this case by
Case 4, where $\hat{i} \neq i^*$, $\hat{i} \in M$, and $\hat{i} \notin M$. The only difference between the two cases is that here $i^* \in M$. We further analyze this case by considering two subcases depending on whether $t^*$ is greater than or less than $\beta$. 

\noindent\emph{Case(3.1):} If \( t^* > \beta \), the algorithm is already in secretary mode when \( i^* \) arrives because $\hat i$ arrives at time $\beta$ is in set $M$, so whether or not \( i^* \in M \) does not affect its eligibility for hiring. The algorithm selects \( i^* \) if it is the best seen so far and no better candidate has already been selected in Line~\getrefnumber{line:normal} of \Cref{alg:cosp}. 

\noindent\emph{Case(3.2):} If $t^* < \beta$, then for $i^*$ to be selected, the algorithm must already be in secretary mode at or before time $t^*$. When $i^* \in M$, it can itself trigger this mode switch. However, if $i^* \notin M$, it can only be selected if another candidate in $M$ appears earlier. Hence, in all scenarios, $i^* \notin M$ is selected with probability no greater than when $i^* \in M$.


Therefore, Case~3 is no harder than Case~4 with one fewer member in $M$ (because of removing $i^*$ from $M$ and thus decreasing its size by $1$), and we conclude that for a fixed $m$, the worst-case competitive ratios of the two cases satisfy the following relation:
\begin{equation*}
    C_3(m,\cdot) \geq C_4(m - 1,\cdot).
\end{equation*}


  \paragraph{\textbf{Case 4: \(\boldsymbol{\hat{i} \neq i^*,~ \hat{i} \in M,~ i^* \notin M}\).}} \label{par:case4}
Under this condition, the top predicted candidate \( \hat{i} \) belongs to the high-deviation set \( M \), while the true optimum \( i^* \notin M \) has a reliable prediction. Therefore, the algorithm never selects a candidate in prediction mode and switches to secretary mode upon encountering the first candidate \( i \in M \). We analyze two main subcases based on the arrival time $t^*$ of $i^*$. The first subcase occurs when $t^* \in (\tau, \beta)$, and the second when $t^* \in (\beta, 1]$. We further divide the second subcase into two parts: (i) the best candidate in $[0, t^*)$ is not $\hat{i}$, and (ii) the best candidate in $[0, t^*)$ is $\hat{i}$. Combining the analyses of all these subcases yields the overall competitive ratio for Case~4.

\noindent\emph{Case(4.1)} When \( t^* \in (\tau, \beta) \):
In this subcase, the algorithm can only consider \( i^* \) if it has already switched to secretary mode by time \( t^* \). This requires that at least one candidate from \( M \) to have arrived before \( t^* \), which occurs with probability \( 1 - (1 - t^*)^{m - 1} \), also, it is sufficient (but not necessary) that the best candidate in \( [0, t^*) \) arrives before \( \tau \) as then no one would be hired before $i^*$, which happens with probability \( \tau / t^* \). 
Similar to Case~1, these two probabilities are independent since any candidate in $M$ which comes before $t^*$ also uniformly arrives between $[0,t^*)$. We obtain the bound:
\begin{equation}
\label{eq:c41q}
C_{4.1}(m,\cdot) \geq \int_{\tau}^{\beta} \left(1 - (1 - t^*)^{m-1} \right) \cdot \frac{\tau}{t^*} \, dt^*.
\end{equation}
using~\cref{lem:lemma41} to calculate this integral, we get:

\begin{equation}
\label{eq:c41}
C_{4.1}(m,\cdot) \geq \tau \sum_{i=1}^{m-1} \binom{m-1}{i} (-1)^{i+1} \cdot \frac{\beta^i - \tau^i}{i}.
\end{equation}



\noindent\emph{Case(4.2) When \( t^* \in (\beta, 1] \):}
In this subcase, the algorithm has already seen \( \hat{i} \in M \), and we examine whether \( i^* \) can still be observed and selected under different conditions. We further divide this case into two subcases:

\noindent\emph{Case(4.2.1)} The best candidate in \( [0, t^*) \) is not \( \hat{i} \):
This occurs with probability $1 - (1 - t^*)^k$, where $k$ is the number of candidates whose true value exceeds that of $\hat{i}$. For $i^*$ to be selected, it is sufficient that the best candidate in $[0, t^*)$ arrives before $\tau$, which happens with probability $\tau / t^*$. Similar to previous cases, these two events are independent, since knowing that some candidate better than $\hat{i}$ arrives before $t^*$ does not affect the fact that all arrivals before $t^*$ are uniformly distributed over $[0, t^*)$. Thus, we obtain the following equation:
\begin{align}
\label{eq:c42q}
C_{4.2.1}(m.\cdot) 
&\geq \int_{\beta}^{1}\left(1 - (1-t^*)^k \right) \cdot \frac{\tau}{t^*} 
 dt^* .
\end{align}
Using~\cref{lem:lemma42} to calculate the integral, we have: 
\begin{align}
\label{eq:c42}
C_{4.2.1}(m) 
&\geq  \tau \sum_{i=1}^{k} \binom{k}{i} (-1)^{i+1} \cdot \frac{1 - \beta^i}{i}.
\end{align}

\noindent\emph{Case(4.2.2)} The best candidate in \( [0, t^*) \) is \( \hat{i} \):
In this scenario, which occurs with probability $(1 - t^*)^k$, the algorithm must reach time $t^*$ without hiring $\hat{i}$ in order for $i^*$ to be selected. This leads to two possible outcomes:

\noindent\emph{Case(4.2.2.1)} 
A candidate in \( M \) other than $\hat{i}$ triggers the switch to secretary mode before \( \beta \): This occurs with probability $1 - (1 - \beta)^{m_2}$, where $m_2$ is the number of candidates in $M$ whose values are less than that of $\hat{i}$. We use $m_2$ here because all candidates in $M$ with values greater than $\hat{i}$ must arrive after $t^*$, given that $\hat{i}$ is assumed to be the best candidate in $[0, t^*)$. 
It is further required that the best candidate in $[0, \beta)$ arrives before $\tau$, which happens with probability $\tau / \beta$. In this case, we do not select any candidate before seeing $\hat{i}$, and when $\hat{i}$ arrives, it must also not be hired, which occurs with probability $1 - \delta$ (Line~\getrefnumber{line:delta} of \Cref{alg:cosp}). Since the event of having a candidate arrive before $\beta$ is independent of the fact that arrivals in $[0, \beta)$ are uniformly distributed, any best candidate in $[0, \beta)$ arrives uniformly at random. Considering statement \emph{Case(4.2.2)} and multiplying these independent probabilities, we have:

\begin{equation}
\label{eq:c43q}
\begin{aligned}
C_{4.2.2.1}(m) &\geq \int_{\beta}^{1} (1 - t^*)^k 
\left(1 - (1 - \beta)^{m_2} \right)(1 - \delta)\frac{\tau}{\beta} 
dt^*.
\end{aligned}
\end{equation}

\noindent\emph{Case(4.2.2.2)} Candidate \( \hat{i} \) itself triggers the switch to secretary mode: This event occurs with probability \( (1 - \beta)^{m_2} \), and is not hired with probability \( 1 - \gamma \). Therefore, incorporating the factor $(1 - t^*)^k$ from (4.2.2), we have:
\begin{equation}
\label{eq:c43w}
\begin{aligned}
C_{4.2.2.2}(m) &\geq \int_{\beta}^{1} (1 - t^*)^k 
(1 - \beta)^{m_2}(1 - \gamma)
 dt^* .
\end{aligned}
\end{equation}

Combining Equations \eqref{eq:c43q} and \eqref{eq:c43w} 
and using~\cref{lem:lemma43} to solve the integrals, we have: 
\begin{equation}
\label{eq:c43}
\begin{aligned}
C_{4.2.2}(m) &\geq \int_{\beta}^{1} (1 - t^*)^k \left[
\left(1 - (1 - \beta)^{m_2} \right)(1 - \delta)\frac{\tau}{\beta} + 
(1 - \beta)^{m_2}(1 - \gamma)
\right] dt^* \\
&= \frac{(1 - \beta)^{k+1}}{k+1} \cdot \left[
\left(1 - (1 - \beta)^{m_2} \right)(1 - \delta)\frac{\tau}{\beta} + 
(1 - \beta)^{m_2}(1 - \gamma)
\right].
\end{aligned}
\end{equation}
Combining Equations~\eqref{eq:c41}, \eqref{eq:c42}, and \eqref{eq:c43} from the subcases, we obtain the total lower bound for this case: 
\begin{align}
    \label{eq:case4}
    C_4(m) \geq\;
        &\tau \sum_{i=1}^{m-1} \binom{m-1}{i} (-1)^{i+1} \cdot \frac{\beta^i - \tau^i}{i} \\
        &+ \tau \sum_{i=1}^{k} \binom{k}{i} (-1)^{i+1} \cdot \frac{1 - \beta^i}{i} 
        \notag \\
        &+ \frac{(1 - \beta)^{k+1}}{k+1} \cdot \left[ \left(1 - (1 - \beta)^{m_2} \right)(1 - \delta)\frac{\tau}{\beta} + (1 - \beta)^{m_2}(1 - \gamma) \right]
     \notag
\end{align}

  \paragraph{\textbf{Case 5: \(\boldsymbol{\hat{i} \neq i^*,~ \hat{i} \notin M,~ i^* \in M}\).}} \label{par:case5}
In this case, the top predicted candidate \( \hat{i} \) is accurate and not in the high-deviation set \( M \), while the true optimum \( i^* \in M \). The algorithm switches to secretary mode upon observing the first candidate from \( M \). We analyze the outcome based on the arrival time \( t^* \) of \( i^* \), and the timing of the best candidate before \( t^* \).


\noindent\emph{Case(5.1)} \( t^* \in [\tau, \beta] \): In this subcase, the competitive ratio depends on whether \( i^* \) is the first candidate in \( M \), i.e., whether it triggers the switch to secretary mode. This subcase is further divided into two subcases:

\noindent\emph{Case(5.1.1)} \( i^* = t_M \): Under this condition, no other candidate in \( M \) arrives before \( t^* \), and \( i^* \) is the first observed candidate in $M$. 
In this case, the algorithm switches to secretary mode at \( t^* \), and immediately selects \( i^* \). This happens with probability \( (1 - t^*)^{m - 1} \), since all other \( m - 1 \) candidates in \( M \) must arrive after \( t^* \).

\noindent\emph{Case(5.1.2)} \( i^* \neq t_M \): In this scenario, some other candidate from \( M \) must arrive before \( t^* \), causing the algorithm to switch modes before seeing \( i^* \). In order for \( i^* \) to be selected, it must arrive after the switch, and the best candidate in \( [0, t^*) \) must have arrived before \( \tau \), which occurs with probability \( \tau / t^* \). Also, the probability that at least one candidate in $M$ arrives before \( t^* \) is \( 1 - (1 - t^*)^{m - 1} \). Hence, the contribution of this case is
\(
\left(1 - (1 - t^*)^{m - 1} \right) \cdot \frac{\tau}{t^*}.
\)
Combining both subcases \noindent\emph{Case(5.1.1)} and \noindent\emph{Case(5.1.2)} and using~\cref{lem:lemma51} to solve the integral, we have:
\begin{align}
\label{eq:C51}
C_{5.1}(m,\cdot) \geq \int_{\tau}^{\beta} \left[
\left(1 - (1 - t^*)^{m-1} \right) \cdot \frac{\tau}{t^*} + (1 - t^*)^{m-1}
\right] dt^* \notag \\
=\; \tau \sum_{i=1}^{m-1} \binom{m-1}{i} (-1)^{i+1} \cdot \frac{\beta^i - \tau^i}{i} 
+ \frac{(1 - \tau)^m - (1 - \beta)^m}{m}.
\end{align}

\noindent\emph{Case(5.2)} \( t^* \in [\beta, 1] \): Here the optimal candidate arrives after the predicted top candidate. We distinguish two subcases based on whether the best candidate in \( [0, t^*) \) is \( \hat{i} \) or not.

\noindent\emph{Case(5.2.1)} The best candidate in \( [0, t^*) \) is not \( \hat{i} \): In this subcase the algorithm may still consider \( i^* \) under certain favorable conditions. First, let us analyze the probability that \( \hat{i} \) is not the best candidate in \( [0, t^*) \). This occurs with probability \( 1 - (1 - t^*)^k \), where \( k \) is the number of candidates with true value exceeding that of \( \hat{i} \). Now, conditioned on this event, there must exist some candidate in $M$ arriving before \( \beta \). Knowing this, the probability that at least one candidate in \( M \) arrives before \( \beta \) is at least \( 1 - (1 - \beta)^{m - 1} \). This is because: 

The presence of a better candidate than $\hat{i}$ before \( t^* \) can only increase the likelihood that a candidate from \( M \) also appears before \( \beta \), or at worst, leaves it unchanged.

Therefore, even under this conditioning, we can safely lower-bound the probability of having some candidate in $M$ before \( \beta \) by \( 1 - (1 - \beta)^{m - 1} \).

Finally, in order to observe \( i^* \), the best candidate in \( [0, t^*) \) must arrive before \( \tau \), which occurs independently with probability \( \tau / t^* \). Thus, the combined lower bound is  (which can be rewritten using~\cref{lem:lemma52}):
\begin{align}
\label{eq:C52}
C_{5.2.1}(m,\cdot) &\geq \int_{\beta}^{1} \left(1 - (1 - t^*)^k \right) \cdot \frac{\tau}{t^*} \cdot \left(1 - (1 - \beta)^{m - 1} \right) dt^* \notag \\
&= \tau \sum_{i=1}^{k} \binom{k}{i} (-1)^{i+1} \cdot \frac{1 - \beta^i}{i} \cdot \left(1 - (1 - \beta)^{m - 1} \right).
\end{align}


\noindent\emph{Case(5.2.2):} The best candidate in \( [0, t^*) \) is \( \hat{i} \): In this subcase, the algorithm must avoid selecting \( \hat{i} \) during secretary mode in order to eventually observe and select \( i^* \). This happens with the following conditions are met: 

\begin{enumerate}
    \item The probability that \( \hat{i} \) is the best candidate in \( [0, t^*) \) is \( (1 - t^*)^k \), 
    where \( k \) is the number of candidates with true value greater than \( v_{\hat{i}} \).
    \item For the algorithm to enter secretary mode before seeing \( \hat{i} \), at least one candidate in \( M \) 
    from the subset of \( M \) with value below \( v_{\hat{i}} \) must arrive before \( \beta \). 
    This happens with probability \( 1 - (1 - \beta)^{m_2} \), where \( m_2 \) is the number of such candidates.
    \item To avoid selecting \( \hat{i} \), the algorithm must skip it, which happens with probability \( 1 - \delta \).
    \item Finally, the best candidate before \( \beta \) must have arrived before \( \tau \), 
    so that the algorithm does not terminate before observing \( \hat{i} \). 
    This occurs with probability \( \tau / \beta \).
\end{enumerate}

Combining all these probabilities using the contribution from this setting is lower-bounded by:
\begin{align}
\label{eq:C53}
C_{5.2.2}(m) &\geq \int_{\beta}^{1} (1 - t^*)^k \cdot \left[ \left(1 - (1 - \beta)^{m_2} \right)(1 - \delta)\frac{\tau}{\beta} \right] dt^* \notag \\
&= \frac{(1 - \beta)^{k+1}}{k+1} \cdot \left[ \left(1 - (1 - \beta)^{m_2} \right)(1 - \delta)\frac{\tau}{\beta} \right].
\end{align}
Summing the contributions from cases \emph{Case(5.1)} and \emph{Case(5.2)} in Equations~\eqref{eq:C51}, \eqref{eq:C52}, and~\eqref{eq:C53}, the total competitive ratio for \textit{Case~5} is lower-bounded by:
\begin{align}
\label{eq:case5}
C_5(m,\cdot) \geq\;
    &\tau \sum_{i=1}^{m-1} \binom{m-1}{i} (-1)^{i+1} \cdot \frac{\beta^i - \tau^i}{i}
    + \frac{(1 - \tau)^m - (1 - \beta)^m}{m} \notag \\
    &+ \tau \sum_{i=1}^{k} \binom{k}{i} (-1)^{i+1} \cdot \frac{1 - \beta^i}{i} \cdot \left(1 - (1 - \beta)^{m - 1} \right) \notag \\
    &+ \frac{(1 - \beta)^{k + 1}}{k + 1} \cdot \left[ \left(1 - (1 - \beta)^{m_2} \right)(1 - \delta)\frac{\tau}{\beta} \right].
\end{align}


  \paragraph{\textbf{Case 6: \(\boldsymbol{\hat{i} \neq i^*,~ \hat{i} \notin M,~ i^* \notin M}\).}} \label{par:case6}
In this case, neither the top-predicted candidate $\hat{i}$ nor the true optimal candidate $i^*$ belongs to the high-deviation set $M$. Consequently, the algorithm may either remain entirely in prediction mode or switch to secretary mode, depending on whether any element of $M$ arrives before $\hat{i}$. The overall analysis thus combines contributions from \emph{Case~4} (when $t^* \in [\tau, \beta]$), \emph{Case~5} (when $t^* \in [\beta, 1]$), together with an additional term accounting for successful selection in prediction mode.

\noindent\emph{Case(6.1):} If the algorithm stays in prediction mode and hires \( \hat{i} \), this occurs when all members of \( M \) arrive after \( \beta \), which happens with probability \( (1 - \beta)^m \). Since \( \hat{i} \notin M \), its predicted value is within a multiplicative factor \( \epsilon \) of its true value. Thus, by Equation~\eqref{eq:pred-bound}, the competitive ratio is at least \( \frac{1 - \epsilon}{1 + \epsilon} \). Therefore, the contribution to the competitive ratio from this event is:
\begin{equation}
\label{eq:C60}
C_{6.1}(m,\cdot) = (1 - \beta)^m \cdot \frac{1 - \epsilon}{1 + \epsilon}.
\end{equation}

\noindent\emph{Case(6.2):} If the algorithm switches to secretary mode before encountering \( \hat{i} \), then the selection of \( i^* \) depends on its arrival time \( t^* \).

\noindent\emph{Case(6.2.1):} \( t^* \in [\tau, \beta] \): Here, the algorithm is already in secretary mode, and the same logic as Equation~\eqref{eq:c41} from \emph{case 4} applies:
\begin{equation}
\label{eq:C62}
C_{6.2.1}(m,\cdot) \geq \tau \sum_{i=1}^{m} \binom{m}{i} (-1)^{i+1} \cdot \frac{\beta^i - \tau^i}{i}.
\end{equation}

\noindent\emph{Case(6.2.2)} \( t^* \in [\beta, 1] \): Here, we follow the same analysis as in Equations~\eqref{eq:C52} and~\eqref{eq:C53} from \emph{Case~5}, where $m$ denotes the number of candidates in $M$, and $m_2$ denotes the number of candidates in $M$ whose true values are less than $v_{\hat{i}}$, respectively. We obtain the following two contributions:

\begin{equation}
\label{eq:C63}
C_{6.2.2}^{(1)}(m,\cdot) \geq \tau \sum_{i=1}^{k} \binom{k}{i} (-1)^{i+1} \cdot \frac{1 - \beta^i}{i} \cdot \left(1 - (1 - \beta)^{m} \right),
\end{equation}

\begin{equation}
\label{eq:C64}
C_{6.2.2}^{(2)}(m,\cdot) \geq \frac{(1 - \beta)^{k+1}}{k+1} \cdot \left[ \left(1 - (1 - \beta)^{m_2} \right)(1 - \delta)\frac{\tau}{\beta} \right].
\end{equation}

Summing the contributions from \emph{Case(6.1)}, \emph{Case(6.2.1)}, and \emph{Case(6.2.2)}, we obtain the overall lower bound for Case~6:

\begin{align}
\label{eq:case6}
C_6(m,\cdot) \geq\;
    & (1 - \beta)^m \cdot \frac{1 - \epsilon}{1 + \epsilon} \notag \\
    &+ \tau \sum_{i=1}^{m} \binom{m}{i} (-1)^{i+1} \cdot \frac{\beta^i - \tau^i}{i} \notag \\
    &+ \tau \sum_{i=1}^{k} \binom{k}{i} (-1)^{i+1} \cdot \frac{1 - \beta^i}{i} \cdot \left(1 - (1 - \beta)^{m} \right) \notag \\
    &+ \frac{(1 - \beta)^{k+1}}{k+1} \cdot \left[ \left(1 - (1 - \beta)^{m_2} \right)(1 - \delta)\frac{\tau}{\beta} \right].
\end{align}

\paragraph{Completing the Proof via Case Enumeration.}
Recall that for each of the structural cases analyzed above, we derived an explicit
competitive-ratio expression $C_j(m,k,m_2)$, parameterized by the integers
$m$, $k$, and $m_2$ that describe the adversarial instance.  
The algorithm's overall guarantee on a given instance is
\[
C(m,k,m_2) \;=\; \min_{j\in\{0,\ldots,6\}} C_j(m,k,m_2),
\]
and our goal is to show that
\[
C(m,k,m_2)\;\ge\; \max\left\{\lowerCOSP,\; \frac{1-\epsilon}{1+\epsilon}\right\}
\qquad\text{for all }(m,k,m_2).
\]

A direct minimization of $C(m,k,m_2)$ over all integer triples $(m,k,m_2)$ is not
tractable: the closed-form expressions contain alternating sums and exponentially
small terms such as $(1-\beta)^m$ and $(1-t^\ast)^k$, which makes it difficult to analyze these formulas simultaneously for all integers. 
However, these expressions simplify dramatically once $m$, $k$, or $m_2$ becomes
sufficiently large.  For the remaining small values of $(m,k,m_2)$, the minimum of
the case formulas can be computed exactly using a finite grid, following an approach
similar to that of~\cite{Fujii2023}.

This motivates a structured enumeration strategy: we partition the parameter space into eight ``regimes'' depending on whether each variable is  \emph{small} (at most $20$) or \emph{large} (greater than $20$).   Regime~1 corresponds to all three parameters being small, while Regimes~2--8  cover all remaining combinations in which at least one parameter is large.

For small values ($m,m_2,k\le 20$), we can evaluate each $C_j(m,k,m_2)$
exactly using its closed form.  
For large values, terms such as $(1-\beta)^m$ become negligible (e.g.,
for $m>20$ and $\beta=0.64$, we have $(1-\beta)^m<10^{-4}$), and can be conservatively replaced by zero.  
These replacements only \emph{decrease} the computed competitive ratio, so they yield valid symbolic lower bounds.  
With this approach, the verification reduces to a finite enumeration over the eight regimes, as summarized in Algorithm~\ref{alg:lower-bound-proof}.

\begin{algorithm}
\caption{Lower Bound Proof via Case Analysis}
\label{alg:lower-bound-proof}
\begin{algorithmic}[1]
\Require Formulas $C_1,C_4,C_5,C_6$ parameterized by $(m,k,m_2)$
\Require Thresholds $T_m = 20$, $T_k = 20$
\Require Target lower bound $B \gets \max\{\lowerCOSP,\; \tfrac{1-\epsilon}{1+\epsilon}\}$

\State \textbf{Regime 1: All parameters small}
\For{$m = 0$ to $T_m$}
    \For{$k = 0$ to $T_k$}
        \For{$m_2 = \max(0,m-k)$ to $m$}
            \State $v \gets \min(C_1(m,k,m_2),\, C_4(m,k,m_2),\, C_5(m,k,m_2),\, C_6(m,k,m_2))$
            \State \textbf{Assert:} $v \ge B$
        \EndFor
    \EndFor
\EndFor

\State \textbf{Regimes 2--8: At least one parameter large}
\ForAll{combinations of large/small status for $(m,k,m_2)$ except Regime 1}
    \State Substitute the symbolic lower bounds appropriate for that regime
    \State $L \gets \min(C_1, C_4, C_5, C_6)$ under this regime
    \State \textbf{Assert:} $L \ge B$
\EndFor

\State \textbf{Conclusion:} All admissible $(m,k,m_2)$ satisfy $C(m,k,m_2)\ge B$
\end{algorithmic}
\end{algorithm}

The enumeration confirms that for all instances the competitive ratio is at least
$\max\{\lowerCOSP,\allowbreak\; \tfrac{1-\epsilon}{1+\epsilon}\}$,
when using the parameters
$\theta=\lowerCOSPtheta,\;
\tau=\lowerCOSPtau,\;
\beta=\lowerCOSPbeta,\;
\gamma=\lowerCOSPgamma,\;
\delta=\lowerCOSPdelta.$


\subparagraph{Symbolic Lower Bounds for Large Parameters.}
We now list the symbolic simplifications for large $m$, $k$, and $m_2$ used in Regimes 2–8 (regimes to 8 cases based on $m$, $k$, and $m_2$ being small or large, i.e. regime 1 is all of them being small) of~\cref{alg:lower-bound-proof}.
We have to calculate other regimes cause formulas are dependent to $m,m_2$ and $k$ and we should bound them. For example, assuming a large $m>20$ in the Equation~\cref{eq:case1}, we can ignore the right-hand side and left-hand side $(1-\beta)^{m-1}$ is less than 0.0001, so the left-hand side will be bounded by $(\tau/\beta) \times \delta \times 0.9999$ same idea works for Equation~\cref{eq:c41q} and other similar equations. In general, in the following formulas, we rounded down all small numbers to $0$. and All $1-\text{small number}$ to 0.9999. As we can show, all these small numbers are less than $0.0001$.
Similarly, for other large $m>20$ in the exponent, the corresponding terms approach zero, allowing us to simplify the equations as follows:

\emph{For large $m > 20$:}
\begin{align}
C_1(.) &\geq 0.9999 \cdot \frac{\tau}{\beta} \cdot \delta \\
C_{4.1}(.) &\geq 0.9999 \cdot \tau \cdot \ln\left(\frac{\beta}{\tau}\right) \\
C_{5.1}(.) &\geq 0.9999 \cdot \tau \cdot \ln\left(\frac{\beta}{\tau}\right) \\
C_{6.1}(.) &\geq 0 \\
 C_{6.2.1}(.) &\geq 0.9999 \cdot \tau \cdot \ln\left(\frac{\beta}{\tau}\right)
\end{align}

The next parameter to discuss is $k$. Similar to $m$, when we raise fixed numbers to the power of $k > 20$, the result will be close to zero.

\emph{For large $k > 20$:}
\begin{align}
C_{4.2.1}(m,.) &\geq 0.9999 \cdot \tau \cdot \ln\left(\frac{1}{\beta}\right) \\
C_{4.2.2}(m,.) &\geq 0 \\
C_{5.2.1}(m,.) &\geq 0.9999 \cdot \tau \cdot \ln\left(\frac{1}{\beta}\right) \cdot \left(1 - (1 - \beta)^{m-1}\right) \\
C_{5.2.2}(m,.) &\geq 0 \\
C_{6.2.2}^{(1)}(m,.) &\geq 0.9999 \cdot \tau \cdot \ln\left(\frac{1}{\beta}\right) \cdot \left(1 - (1 - \beta)^{m}\right) \\
C_{6.2.2}^{(2)}(m,.) &\geq 0
\end{align}
The last parameter is $m_2$, which is bounded by $m$. However, when $m$ is large, this parameter can also become large. By rewriting the equation for large $m_2$, in the same way as in the previous two parts, we obtain:

\emph{For large $m_2$:}
\begin{align}
C_{4.2.2}(.) &\geq
\frac{(1 - \beta)^{k+1}}{k+1} \cdot \left[ 0.9999 \cdot (1 - \delta) \cdot \frac{\tau}{\beta} \right] \\
C_{5.2.2}(.) &\geq
\frac{(1 - \beta)^{k+1}}{k+1} \cdot \left[ 0.9999 \cdot (1 - \delta) \cdot \frac{\tau}{\beta} \right] \\
C_{6.2.2}^{(2)}(.) &\geq
\frac{(1 - \beta)^{k+1}}{k+1} \cdot \left[ 0.9999 \cdot (1 - \delta) \cdot \frac{\tau}{\beta} \right]
\end{align}
We should also consider the case where both $m$ and $k$ are large, and determine which formula to examine. 
It is worth mentioning that we do not need to calculate the case where both $m_2$ and $k$ are large separately, 
since $m_2$ being large implies that $m$ is large, which simplifies all equations sufficiently to compute.

\emph{For large $m$ and $k$:}
\begin{align}
C_{5.2.1}(.) &\geq 0.9999^2 \cdot \tau \cdot \ln\left(\frac{1}{\beta}\right) \\
C_{6.2.1}(.) &\geq 0.9999^2 \cdot \tau \cdot \ln\left(\frac{1}{\beta}\right)
\end{align}

Using all the above observations in \cref{alg:lower-bound-proof} establishes that the chosen parameters guarantee the competitive ratio, and thus the theorem holds.

\end{proof}

\section{Random Order Secretary Problem (ROSP) with Prediction}
\label{sec:rand}

We now analyze the same algorithm from ~\cref{sec:tbgd-method} under a randomized arrival setting, where each candidate—including the top predicted candidate \(\hat{i}\)—is assigned an independent arrival time \(t_i \sim \text{Unif}[0,1]\), as in the classical secretary model. This replaces the 
optimally chosen positioning of \(\hat{i}\) with a uniform distribution, aligning with the standard assumption in prior work such as Fujii and Yoshida~\cite{Fujii2023}. 

The analysis is divided into seven cases, indexed by \( j \in \{0, 1, \ldots, 6\} \), with the corresponding competitive ratio denoted by the function \( C'_j(\cdot) \) of parameters set by the adversary (i.e., \(m\), \(k\), \(v_i\), and \(\hat{v}_i\)).
 While the algorithm in this section remains unchanged from \cref{alg:cosp}---beginning in prediction mode and switching to secretary mode upon observing a candidate with large deviation from its prediction---the analysis is significantly impacted. Specifically, any term involving \(\beta\), the pre-determined arrival time of \(\hat{i}\), must now be averaged over \([0,1]\) and replaced with $\hat{t}$, introducing an additional layer of integration into the competitive ratio bounds for each case. Similar to ~\cref{sec:tbgd-method}, we numerically optimize the parameters of \Cref{alg:cosp}---\( \theta, \tau, \gamma, \delta \)---and formally verify that, under random-order input, our algorithm achieves a competitive ratio of at least $
\max\left\{\lowerROSP,\; \frac{1 - \epsilon}{1 + \epsilon}\right\}$ across all cases.

\begin{theorem}\label{thm:bgd-rand}
~\Cref{alg:cosp} with parameters \(\theta = \lowerROSPtheta\), \(\tau = \lowerROSPtau\), \(\gamma = \lowerROSPgamma\), \(\delta = \lowerROSPdelta\) and $\beta$ uniformly drawn from $[0,1]$ is at least \(
\max\left\{\lowerROSP,\; \frac{1 - \epsilon}{1 + \epsilon}\right\}\text{-competitive}
\)-competitive for the Random Order Secretary Problem (ROSP). 
\end{theorem}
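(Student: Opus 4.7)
The plan is to reuse the full seven-case decomposition from the proof of \Cref{thm:bgd-competitive} essentially verbatim, and only modify the way the parameter $\beta = \hat{t}$ enters the bounds. In \ROSP, $\hat{t}$ is no longer chosen by the algorithm but is an independent $\mathrm{Unif}[0,1]$ random variable, independent of the arrival times of the remaining $n-1$ candidates (which were already uniform in the proof of \Cref{thm:bgd-competitive}). Hence, for each adversarial configuration indexed by $(m,k,m_2)$, the case-by-case lower bounds $C_j(m,k,m_2,\beta)$ derived in \Cref{sec:tbgd-method} remain valid pointwise in $\beta$, and the random-order competitive ratio satisfies
\begin{equation*}
C'_j(m,k,m_2) \;\ge\; \int_0^1 C_j(m,k,m_2,\beta)\, d\beta,
\end{equation*}
by Fubini/linearity of expectation. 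Case~0 is unchanged: if $|M|=0$ then the algorithm always hires $\hat{i}$, and \eqref{eq:pred-bound} directly gives the $(1-\epsilon)/(1+\epsilon)$ bound whenever $\epsilon \le \theta$.

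First I would revisit each of Cases~1 through 6 and extend the per-$\beta$ expressions to cover both $\beta > \tau$ (already done in the proof of \Cref{thm:bgd-competitive}) and $\beta \le \tau$. In the latter regime, $\hat{i}$ arrives before the Dynkin threshold, so several probability factors simplify: for instance, in Case~1 the ``best-of-$[0,\beta)$ arrives before $\tau$'' event becomes trivial, and in Cases~4--6 the integrals over $t^*$ break only at $\beta$ and $1$ rather than at $\tau$, $\beta$, $1$. After writing these piecewise expressions, I would integrate each one in $\beta$ over $[0,1]$. The resulting integrals reduce to elementary combinations of the terms
\begin{equation*}
\int_0^1 (1-\beta)^a\, d\beta,\qquad \int_\tau^1 \frac{(1-\beta)^a}{\beta}\, d\beta,\qquad \int_\tau^1 (1-\beta)^a \ln(\beta/\tau)\, d\beta,
\end{equation*}
for nonnegative integer exponents $a$ depending on $m$, $m_2$, and $k$. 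These have closed forms (incomplete Beta functions and finite alternating sums arising from binomial expansions), and I would keep them symbolic so that the resulting $C'_j(m,k,m_2)$ are exact functions of the algorithm parameters $\tau,\gamma,\delta$.

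Next I would carry out the same finite-enumeration verification as \Cref{alg:lower-bound-proof}: evaluate $\min_j C'_j(m,k,m_2)$ exactly for all $(m,k,m_2)$ with $m,k \le T$ (say $T=20$), and for the eight regimes in which at least one of $m,k,m_2$ exceeds $T$, use the symbolic lower bounds obtained by dropping $(1-\beta)$-type terms raised to large powers (which are uniformly negligible after integration, since $\int_0^1 (1-\beta)^a d\beta = 1/(a+1)$). Plugging in the candidate parameters $\theta=\lowerROSPtheta,\ \tau=\lowerROSPtau,\ \gamma=\lowerROSPgamma,\ \delta=\lowerROSPdelta$ and comparing to $\lowerROSP$ should close every case; taking the maximum with the Case~0 bound then yields $\max\{\lowerROSP,(1-\epsilon)/(1+\epsilon)\}$.

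The main obstacle will be Case~6, which dominates the minimum and governs the final constant. Its per-$\beta$ expression contains the consistency term $(1-\beta)^m \cdot \tfrac{1-\epsilon}{1+\epsilon}$ as well as the three secretary-mode contributions coming from $t^* \in [\tau,\beta]$, $t^* \in [\beta,1]$ with $\hat{i}$ not best-so-far, and $t^* \in [\beta,1]$ with $\hat{i}$ best-so-far. Integrating in $\beta$ over $[0,1]$ couples these terms: the $\tau/\beta$ factors produce logarithmic pieces that must be carefully split at $\beta=\tau$, and the worst-case values of $k$ and $m_2$ depend delicately on $\tau,\gamma,\delta$. I expect the binding configurations to occur at small $m$ and $k$ together with either $m_2=0$ or $m_2=m$, and I would use this structure both to guide the numerical parameter optimization and to make the enumeration bound over large $(m,k,m_2)$ monotone and therefore safe to truncate at $T=20$.
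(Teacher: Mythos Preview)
Your proposal is correct and follows essentially the same route as the paper: split each of Cases~0--6 according to whether $\hat t$ falls in $[0,\tau)$ or $[\tau,1]$, reuse the \COSP per-$\beta$ bounds on the latter interval, derive fresh expressions on the former, integrate in $\hat t$, and then run the same finite enumeration of $(m,k,m_2)$ with symbolic tails for large parameters. Two small descriptive slips worth flagging (neither affects the plan): in Case~1 with $\hat t<\tau$ the contribution is zero because $\hat i=i^*$ itself lands before $\tau$ and is therefore never eligible in secretary mode, not because the ``best-of-$[0,\beta)$'' event trivializes; and in Cases~4--6 with $\hat t<\tau$ the $t^*$-integral still starts at $\tau$ (not at $\beta$), since secretary mode cannot hire before $\tau$.
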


\begin{proof}
Similar to ~\cref{sec:tbgd-method}, we divide the analysis into seven cases depending on whether the top predicted candidate \(\hat{i}\) and the true optimal candidate \(i^*\) are in the set \( M = \left\{ i \in [n] : \left|1 - \frac{\hat{v_i}}{v_i}\right| > \theta \right\} \).
The structure of the analysis and the algorithmic behavior in each case remains the same. The key difference is that the arrival time $\hat{t}$ of \(\hat{i}\) is now drawn uniformly at random from \([0,1]\), introducing an additional averaging step into all expressions that depend on $\hat{t}$. Let \(i^* = \arg\max_{i \in [n]} v(i)\) denote the optimal candidate. Similar to ~\cref{sec:tbgd-method}, we define \(k = \left| \left\{ i \in [n] : v_i > v_{\hat{i}} \right\} \right|\) as the number of candidates whose true value exceeds that of \(\hat{i}\), and \(m_2 = \left| \left\{ i \in M : v_i < v_{\hat{i}} \right\} \right|\) as the number of candidates with unreliable prediction with true value less than \(\hat{i}\). We now analyze each case in the same order as in ~\cref{sec:tbgd-method}, integrating over the uniform distribution of time $\hat{t}$ where appropriate. As in ~\cref{sec:tbgd-method}, the supporting technical lemmas are deferred to the appendix, and we present only the main case-by-case arguments in this section.

\paragraph{\textbf{Case 0: \(\boldsymbol{|M| = 0}\).}}  

This case mirrors \emph{Case 0} from the analysis in \cref{sec:tbgd-method}. Since no candidate exceeds the prediction error threshold~\(\theta\), the algorithm remains in prediction mode throughout and directly selects the top predicted candidate~\(\hat{i}\). As shown in Equations~\eqref{eq:case0} and~\eqref{eq:case01}, the competitive ratio in this case is at least \(\tfrac{1 - \epsilon}{1 + \epsilon}\). Moreover, because \(|M| = 0\) implies \(\epsilon \leq \theta = \lowerROSPtheta\), and since the final competitive guarantee is defined as the maximum between this bound \(\tfrac{1 - \epsilon}{1 + \epsilon}\) and the constant bound derived from the remaining cases (at least~\(\lowerROSP\)), we conclude that this case meets the required performance threshold.

In the remainder of the proof, we consider the case where \(M \neq \emptyset\), so the algorithm transitions from prediction to secretary mode upon encountering the first candidate in \(M\). As stated above, here all arrival times---including that of \(\hat{i}\)---are drawn independently from the uniform distribution over \([0,1]\). To handle this, for each case, we partition the analysis based on whether \(\hat{t} < \tau\) or \(\hat{t} \geq \tau\). For range \(\hat{t} \in [\tau, 1]\), we reuse the case-specific lower bounds from ~\cref{sec:tbgd-method} and integrate them over \(\hat{t}\), replacing each occurrence of \(\beta\) with the random variable \(\hat{t}\).
 For range \(\hat{t}\in [0, \tau]\), we derive new expressions reflecting the algorithm's behavior in the early-arrival setting and likewise integrate them. Summing these two contributions for each case completes the competitive ratio analysis required to prove~\cref{thm:bgd-rand}.

\paragraph{\textbf{Case 1: \(\boldsymbol{\hat{i} = i^* \in M}\).}} 
Here, the candidate with the highest predicted value is also the true optimal candidate, but it has a large deviation from its prediction and hence belongs to set \(M\). The algorithm will switch to secretary mode upon encountering the first candidate \(i \in M\), and may select \(\hat{i} = i^*\) during that phase. Depending on where $\hat{t}$ falls, we examine two scenarios:

\noindent\emph{(1.1)} When \(\hat{t} < \tau\): 
The probability of selecting \(i^*\) in this regime is zero. Since \(i^* = \hat{i}\) arrives before the threshold \(\tau\), the algorithm will not hire it in prediction mode. Moreover, regardless of whether \(i^*\) is the first member of \(M\) or not, the algorithm will not consider it during secretary mode either, because it appears before $\tau$. 

\noindent\emph{(1.2)} When \(\hat{t} \in [\tau, 1]\):  
As in the pre-determined \(\hat{t} = \beta\) setting (see Equation~\eqref{eq:case1} in ~\cref{sec:tbgd-method}), the algorithm selects \(\hat{i} = i^*\) with probability \(\delta\) if it arrives after the switch (i.e., is observable), and with probability \(\gamma\) if it is the switch-triggering candidate, corresponding to Lines~11 and~13 of \Cref{alg:cosp}, respectively.

Since time \(\hat{t}\) is now uniformly distributed over \([0,1]\), we integrate lower bound Equation~\eqref{eq:case1} 
over the interval \([\tau, 1]\):
\begin{equation}
C_1'(m,.) \geq \delta \tau \int_{\tau}^{1} \frac{1 - (1 - \hat{t})^{m - 1}}{\hat{t}} \, d\hat{t}
+ \gamma \int_{\tau}^{1} (1 - \hat{t})^{m - 1} \, d\hat{t}.
\label{eq:ccp11}
\end{equation}

We simplify each term in the right-hand side of the above inequality as follows:
\begin{equation}
\begin{aligned}
\delta\tau\int_{\tau}^{1} \frac{1 - (1 - \hat{t})^{m - 1}}{\hat{t}} \, d\hat{t}
&= \delta\tau\ln\left(\frac{1}{\tau}\right)
- \delta\tau\int_{\tau}^{1} \frac{(1 - \hat{t})^{m - 1}}{\hat{t}} \, d\hat{t} \\
&= \delta\tau\ln\left(\frac{1}{\tau}\right)
- \delta\tau\left( \sum_{i = 1}^{m - 1} \binom{m - 1}{i} (-1)^i \cdot \frac{1 - \tau^i}{i}
+ \ln\left(\frac{1}{\tau}\right) \right) \\
&= - \delta\tau\sum_{i = 1}^{m - 1} \binom{m - 1}{i} (-1)^i \cdot \frac{1 - \tau^i}{i},
\label{eq:ccp12}
\end{aligned}
\end{equation}
And for the other term, we have:
\begin{equation}
\begin{aligned}
\gamma\int_{\tau}^{1} (1 - \hat{t})^{m - 1} \, d\hat{t}
&= \gamma\frac{(1 - \tau)^m}{m}.
\label{eq:ccp13}
\end{aligned}
\end{equation}
Therefore, the overall competitive ratio in this case is lower bounded by sum of Equations~\eqref{eq:ccp12} and~\eqref{eq:ccp13}:
\begin{equation}
C_1'(m,.) \geq
\delta \tau \sum_{i = 1}^{m - 1} \binom{m - 1}{i} (-1)^{i+1} \cdot \frac{1 - \tau^i}{i}
+ \gamma \cdot \frac{(1 - \tau)^m}{m}.
\label{eq:cpone}
\end{equation}

\paragraph{\textbf{Case 2: \(\boldsymbol{\hat{i} = i^* \notin M}\).}} 
In this situation, the candidate with the highest predicted value, \(\hat{i}\), is also the true optimal candidate \(i^*\), and its prediction error is small enough that it does not belong to set \(M\). Nevertheless, the algorithm may enter secretary mode before time \(\hat{t}\), triggered by the arrival of some other candidate \(i \in M\), at time \(t_M < \hat{t}\).

Once in secretary mode, the algorithm can select \(i^*\) only if it arrives after \(\tau\) and is the best candidate observed up to its arrival. To conservatively lower bound the performance, we follow the same pessimistic reduction used in \textit{Case~2} of ~\cref{sec:tbgd-method}: we pessimistically assume that \(i^*\) belongs to \(M\), effectively reducing this case to \textit{Case~1} with one additional member in $M$.

This reduction cannot increase the algorithm's performance and may reduce it, since it potentially triggers an earlier switch to secretary mode, limiting the probability of hiring \(i^*\). Since in prediction mode, the algorithm selects the predicted-best candidate 
deterministically at its scheduled arrival time $\hat{t}$. 
At this point, no candidate from $M$ has yet appeared (see Line~7 of \Cref{alg:cosp}).

Declaring \(i^* \in M\), however, eliminates this possibility and bounds the hiring probability to at most \(\gamma\) or \(\delta\), depending on whether it triggers the mode switch or not. Therefore, the competitive ratio in this case is at least as large as the lower bound from \textit{Case~1} above with \(m+1\) member in $M$:
\begin{equation}
C_2'(m,.) \geq C_1'(m + 1,.).
\end{equation}

\paragraph{\textbf{Case 3: \(\boldsymbol{\hat{i} \neq i^*,\; \hat{i} \in M,\; i^* \in M}\).}} 

In this scenario, both the top predicted candidate \(\hat{i}\) and the true optimal candidate \(i^*\) have large deviation from their predictions, so both belong to set \(M\). We analyze this setting by a pessimistic reduction to \textit{Case~4} (handled next) where $i_{\beta} \neq i^*,~ i_{\beta} \in M$, and \(i^* \notin M\).In particular, we show that since $i^* \notin M$ delays the algorithm's switch to secretary mode, the scenario in \textit{Case~4} has a smaller competitive ratio. 
We divide the analysis based on the arrival time \(t^*\) of \(i^*\), following the same structure as in ~\cref{sec:tbgd-method}:

\noindent\emph{(3.1)}~If \(t^* > \hat{t}\): then the algorithm is already in secretary mode, triggered by the arrival of \(\hat{i} \in M\) at time \(\hat{t}\). Whether \(i^*\) belongs to \(M\) or not does not affect its selection in this mode; hence, membership in \(M\) is irrelevant in this scenario.

\noindent\emph{(3.2)}~If \(t^* < \hat{t}\): then for the algorithm to observe and select \(i^*\), it must have already switched to secretary mode before time \(t^*\), since we assumed $t^* \notin M$. This occurs only if another member of \(M\) arrives prior to \(t^*\). Declaring \(i^* \notin M\) (as in \textit{Case~4}) reduces the chance of hiring \(i^*\), since otherwise no member of \(M\) would be required to arrive before \(t^*\) and $i^*$ itself could serve as the switch-triggering candidate. 
This results in a smaller competitive ratio for \textit{Case~4}.

By this pessimistic reduction, we conclude that this case is no more difficult than the \textit{Case~4} below with one fewer candidate in $M$. Thus, the competitive ratio is bounded as below:
\begin{equation}
C_3'(m,.) \geq C_4'(m - 1,.).
\end{equation}


\paragraph{\textbf{Case 4: \(\boldsymbol{\hat{i} \neq i^*,~ \hat{i} \in M,~ i^* \notin M}\).}}  
Here, the top predicted candidate \(\hat{i}\) is in \(M\), while the true optimal candidate \(i^*\) is not. Therefore, the algorithm always switches to secretary mode and never hires during prediction mode. We divide the analysis into two parts depending on where \(\hat{t}\) falls:

\noindent\emph{(4.1)} When \(\hat{t} < \tau\): 
In this subcase, \(\hat{i}\) arrives before the threshold \(\tau\), and the algorithm switches to secretary mode at or before time \(\hat{t}\). The only way to select \(i^*\) is if it arrives after \(\tau\), and the best candidate in \([0, t^*)\) arrives before \(\tau\), ensuring this best candidate was not selected during prediction mode. We analyze two further situations now:

\noindent\emph{(4.1.1)} \(\hat{i}\) is the best in \([0, t^*)\):
For this scenario to occur, every candidate \(i\) with \(v_i > v_{\hat{i}}\) must lie outside the interval \([0, t^*)\), which happens with probability \((1 - t^*)^k\), where \(k\) denotes the number of candidates whose true value exceeds that of \(\hat{i}\). The algorithm skips \(\hat{i}\) and reaches \(t^*\), now in secretary mode. Since $t^*$ is drawn uniformly at random, we must integrate this probability 
over the range $t^* \in [\tau,1]$ to capture all possible arrival times of $i^*$ when it can be selected in this scenario and calculate the contribution of this part to the competitive ratio of this case. Therefore:
\begin{equation}
C'_{4.1.1}{}(m,.) \geq \int_{\tau}^{1} (1 - t^*)^k \,dt^*
\label{eq:cp411}
\end{equation}
\noindent\emph{(4.1.2)} Another candidate, rather than $\hat{i}$, is the best in \([0, t^*)\):
This event occurs with probability \(1 - (1 - t^*)^k\), corresponding to the existence of at least one candidate with true value greater than that of \(\hat{i}\) arriving before \(t^*\). And for \(i^*\) to be observed and selected, the best candidate prior to \(t^*\) must arrive before \(\tau\), which happens with probability \(\tfrac{\tau}{t^*}\). The terms multiply because the two events are independent, and we integrate 
over $t^* \in [\tau,1]$ for the same reason as above. Thus:
\begin{equation}
C'_{4.1.2}{}(m,.) \geq \int_{\tau}^{1}\frac{\tau}{t^*}(1 - (1 - t^*)^k)dt^*
\label{eq:cp4100}
\end{equation}
\noindent Summing Equations~\eqref{eq:cp411} and~\eqref{eq:cp4100} and integrating over $[0,\tau]$, 
corresponding to the arrival time of $\hat{t}$, yields the competitive ratio for this case:
\begin{equation}
C'_{4.1}(m,.) \geq \int_0^{\tau}\left(\int_{\tau}^{1}\frac{\tau}{t^*}\left(1 - (1 - t^*)^k\right)dt^* + \int_{\tau}^{1}(1 - t^*)^k \,dt^*\right)d\hat{t}
\label{eq:cp41first}
\end{equation}
using~\cref{lem:lemmaX41} the above equation simplifies to:
\begin{equation}
C'_{4.1}(m,.)
\geq \tau\left[-\tau\sum_{i=1}^{k} \binom{k}{i} (-1)^i \frac{1 - \tau^i}{i} + \frac{(1 - \tau)^{k+1}}{k+1}\right]
\label{eq:cp41}
\end{equation}

\noindent\emph{(4.2)} When \(\hat{t} \geq \tau\):
This scenario follows the same logic as \textit{Case 4} in ~\cref{sec:tbgd-method}. 
We compute the expectation over \(\hat{t} \in [\tau, 1]\), using the COSP expression from 
Equation~\eqref{eq:case4}, to account for the random arrival time of \(\hat{i}\):

\begin{equation*}
    C'_{4.2}(m,.) = \int_{\tau}^{1} C_4(m,\cdot) \, d\hat{t}
\end{equation*}
Applying~\cref{lem:lemmaX42} simplifies $C'_{4.2}(m)$:
\begin{equation}
\begin{aligned}
C'_{4.2}(m,.) =
& \tau \sum_{i=1}^{m-1} \binom{m-1}{i} (-1)^{i+1} \left( \frac{1 - \tau^{i+1}}{i(i+1)} - \frac{\tau^i (1 - \tau)}{i} \right) \\
& + \tau \sum_{i=1}^{k} \binom{k}{i} (-1)^{i+1} \left( \frac{1 - \tau}{i} - \frac{1 - \tau^{i+1}}{i(i+1)} \right) \\
& + \frac{(1 - \delta)\tau}{k+1} \left[ 
\sum_{i=1}^{k+1} \binom{k+1}{i} (-1)^i \frac{1 - \tau^i}{i}
- \sum_{i=1}^{k+1+m_2} \binom{k+1+m_2}{i} (-1)^i \frac{1 - \tau^i}{i}
\right] \\
& + \frac{(1 - \gamma)}{k+1} \cdot \frac{(1 - \tau)^{k+2 + m_2}}{k+2 + m_2}
\end{aligned}
\label{eq:cp42}
\end{equation}

\noindent Summing both Equations~\eqref{eq:cp41} and~\eqref{eq:cp42} we have the total competitive ratio of \textit{Case 4}:

\begin{equation}
\begin{aligned}
C'_4(m,.) \geq\;
& \tau \left[-\tau \sum_{i=1}^{k} \binom{k}{i} (-1)^i \frac{1 - \tau^i}{i} + \frac{(1 - \tau)^{k+1}}{k+1} \right] \\
& + \tau \sum_{i=1}^{m-1} \binom{m-1}{i} (-1)^{i+1} \left( \frac{1 - \tau^{i+1}}{i(i+1)} - \frac{\tau^i (1 - \tau)}{i} \right) \\
& + \tau \sum_{i=1}^{k} \binom{k}{i} (-1)^{i+1} \left( \frac{1 - \tau}{i} - \frac{1 - \tau^{i+1}}{i(i+1)} \right) \\
& + \frac{(1 - \delta)\tau}{k+1} \left[ 
\sum_{i=1}^{k+1} \binom{k+1}{i} (-1)^i \frac{1 - \tau^i}{i}
- \sum_{i=1}^{k+1+m_2} \binom{k+1+m_2}{i} (-1)^i \frac{1 - \tau^i}{i}
\right] \\
& + \frac{(1 - \gamma)}{k+1} \cdot \frac{(1 - \tau)^{k+2 + m_2}}{k+2 + m_2}
\end{aligned}
\label{eq:cp4}
\end{equation}


\paragraph{\textbf{Case 5: \(\boldsymbol{\hat{i} \neq i^*,~ \hat{i} \notin M,~ i^* \in M}\).}} 
Here, the top predicted candidate \(\hat{i}\) is not in \(M\), 
while the optimal candidate \(i^*\) is in \(M\).
Similar to \textit{Case 4} above, we divide the analysis into two ranges for time \(\hat{t}\):

\noindent\emph{(5.1)} When \(\hat{t} < \tau\): For the decision maker to see $t^*$, 
the algorithm must not have selected $\hat{i}$ in prediction mode. Thus, it is necessary that the algorithm has already switched to secretary mode by time \(\hat{t}\); 
otherwise, it would have committed to hiring \(\hat{i}\) upon its arrival.
Therefore, regardless of whether \(\hat{i}\) is the best candidate in \([0, t^*)\), the algorithm must have encountered at least one candidate \(i \in M\) before \(\hat{t}\) to trigger the switch to secretary mode. The probability of this general condition occurring is:

\begin{equation}
    C'_{5.1}(m,.) \;=\; 1 - (1 - \hat{t})^{m_2}.
    \label{eq:cp511}
\end{equation}

\noindent where, as defined earlier, \(m_2\) is the number of candidates in \(M\) with true values less than \(v_{\hat{i}}\). Thus, the overall contribution of \textit{Case~5.1} is obtained by combining Equation~\eqref{eq:cp511} with the two bounds in the subcases below:

\noindent\emph{(5.1.1)} The best candidate in \([0, t^*)\) is \(\hat{i}\): This condition occurs when all candidates with true values higher than \(\hat{i}\) arrive after \(t^*\). This happens with probability $(1-t^*)^k$, where $k$ is defined as the number of candidates with true value more than $v_{\hat{i}}$. Note that this occurs independently of the general condition of being in secretary mode mentioned in \textit{(5.1)}. Therefore, multiplying by Equation~\eqref{eq:cp511}, this scenario occurs with probability:

\begin{equation}
    C'_{5.1.1}(m,.) = \int_{\tau}^{1}(1-t^*)^k(1 - (1 - \hat{t})^{m_2}) dt^*.
    \label{eq:cp5113}
\end{equation}

\noindent\emph{(5.1.2)} The best candidate in \([0, t^*)\) is not \(\hat{i}\): In this scenario, some other candidate \(i\) must have the highest value in \([0, t^*)\) and must also arrive before \(\tau\) to avoid being selected during secretary mode. These events occur with probabilities \(1 - (1 - t^*)^k\) and \(\tfrac{\tau}{t^*}\), respectively, and are independent.
Considering we are in secretary mode, as stated in \textit{(5.1)}, scenario \textit{(5.1.2)} happens with probability:

\begin{equation}
C'_{5.1.2}(m,.) = \int_{\tau}^{1}\frac{\tau}{t^*}(1 - (1 - t^*)^k)(1 - (1 - \hat{t})^{m_2}) dt^*.
    \label{eq:cp5112}
\end{equation}

As before, the integral accounts for averaging over all feasible arrival times \(t^* \in [\tau,1]\).
Summing Equations~\eqref{eq:cp5112} and~\eqref{eq:cp5113} and averaging over the range \([0,\tau]\), where \(\hat{t}\) falls, we obtain the total lower bound for \textit{(5.1)}: 
\begin{equation}
\begin{aligned}
C'_{5.1}(m,.) = \int_0^{\tau} \left( \int_{\tau}^{1}
\left(\frac{\tau}{t^*} \left(1 - (1 - t^*)^k\right) \left(1 - (1 - \beta)^m\right)
+ (1 - t^*)^k \left(1 - (1 - \beta)^{m_2}\right)\right) \, dt^* \right) d\hat{t} \\
= - \tau \left( \sum_{i=1}^{k} \binom{k}{i} (-1)^i \frac{1 - \tau^i}{i} \right)
\left( \tau - \frac{1 - (1 - \tau)^{m+1}}{m+1} \right) \\
\quad + \frac{(1 - \tau)^{k+1}}{k + 1} \left( \tau - \frac{1 - (1 - \tau)^{m_2+1}}{m_2+1} \right)
\end{aligned}
\label{eq:cp51}
\end{equation}
\noindent\emph{(5.2)} When \(\hat{t} \geq \tau\): 
  The analysis for this part directly parallels that of \textit{Case~5} in ~\cref{sec:tbgd-method}. As outlined at the beginning of this proof, we only need to compute the expectation over \(\hat{t} \in [\tau, 1]\) of the expression \(C_5(m)\) from Equation~\eqref{eq:case5}, corresponding to the COSP setting in ~\cref{sec:tbgd-method}:
Taking the expectation over \(\hat{t} \in [\tau, 1]\) and solve it using~\cref{lem:lemmaX52}, we get:
\begin{equation}
\begin{aligned}
C'_{5.2}(m,.) &= \int_{\tau}^{1} C_5(m) \, d\hat{t} \\
&\geq \tau \sum_{i=1}^{m-1} \binom{m\!-\!1}{i} (-1)^{i\!+\!1}
\left( \frac{1\!-\!\tau^{i\!+\!1}}{i(i\!+\!1)} - \frac{\tau^i (1\!-\!\tau)}{i} \right) \\
&\quad + \frac{(1\!-\!\tau)^m (1\!-\!\tau)}{m} - \frac{(1\!-\!\tau)^{m\!+\!1}}{m(m\!+\!1)} \\
&\quad + \tau \sum_{i=1}^{k} \binom{k}{i} (-1)^{i\!+\!1} \frac{1}{i}
\left[
(1\!-\!\tau) - \frac{1\!-\!\tau^{i\!+\!1}}{i\!+\!1} - \frac{(1\!-\!\tau)^m}{m}
+ \frac{i!\,(m\!-\!1)!}{(i\!+\!m)!} - \int_0^{\tau} \hat{t}^i (1\!-\!\hat{t})^{m\!-\!1} d\hat{t}
\right] \\
&\quad + \frac{(1\!-\!\delta)\tau}{k\!+\!1} \left[
\sum_{i=1}^{k\!+\!1} \binom{k\!+\!1}{i} (-1)^i \frac{1\!-\!\tau^i}{i}
- \sum_{i=1}^{k\!+\!1\!+\!m_2} \binom{k\!+\!1\!+\!m_2}{i} (-1)^i \frac{1\!-\!\tau^i}{i}
\right]
\end{aligned}
\label{eq:cp52}
\end{equation}
\noindent {Summing Equations~\cref{eq:cp51} and~\cref{eq:cp52} gives the competitive ratio for \textit{Case~5}}:
\begin{equation}
\begin{aligned}
C'_5{}(m,.) \geq\;
& -\tau \left( \sum_{i=1}^{k} \binom{k}{i} (-1)^i \frac{1 - \tau^i}{i} \right)
\left( \tau - \frac{1 - (1 - \tau)^{m+1}}{m+1} \right) \\
& + \frac{(1 - \tau)^{k+1}}{k + 1} \left( \tau - \frac{1 - (1 - \tau)^{m_2+1}}{m_2+1} \right) \\
& + \tau \sum_{i=1}^{m-1} \binom{m-1}{i} (-1)^{i+1}
\left( \frac{1 - \tau^{i+1}}{i(i+1)} - \frac{\tau^i (1 - \tau)}{i} \right) + \frac{(1 - \tau)^{m+1}}{m+1} \\
& + \tau \sum_{i=1}^{k} \binom{k}{i} (-1)^{i+1} \frac{1}{i}
\left[
(1 - \tau) - \frac{1 - \tau^{i+1}}{i+1} - \frac{(1 - \tau)^m}{m}
+ \sum_{j=0}^{m-1} \binom{m-1}{j} (-1)^j \frac{1 - \tau^{i + j + 1}}{i + j + 1}
\right] \\
& + \frac{(1 - \delta)\tau}{k+1} \left[
\sum_{i=1}^{k+1} \binom{k+1}{i} (-1)^i \frac{1 - \tau^i}{i}
- \sum_{i=1}^{k+1+m_2} \binom{k+1+m_2}{i} (-1)^i \frac{1 - \tau^i}{i}
\right]
\end{aligned}
\label{eq:cp5}
\end{equation}

\paragraph{\textbf{Case 6: \(\boldsymbol{\hat{i} \neq i^*,~ \hat{i} \notin M,~ i^* \notin M}\).}} 
In this scenario, neither the top predicted candidate \(\hat{i}\) nor the optimal candidate \(i^*\) is in the mistake set \(M\).
The algorithm may succeed 
by remaining in prediction mode and hiring \(\hat{i}\), 
or by switching to secretary mode and subsequently hiring \(i^*\) (or another high-value candidate). 
We analyze these two possibilities separately.

\noindent\emph{(6.1)} 
$\hat{i}$ arrives in prediction mode: 
If there is no candidate \(i \in M\) arriving before time \(\hat{t}\), i.e., the algorithm never switches mode, then it hires \(\hat{i}\) in prediction mode. This occurs with probability \((1 - \beta)^m\). Since both \(\hat{i}\) and \(i^*\) are not in \(M\), the competitive ratio in this case is at least \((1 - \epsilon)/(1 + \epsilon)\), according to Equation~\eqref{eq:pred-bound}. Therefore, the expected contribution is:
\begin{equation}
C'_{6.1}(m,.) = \int_0^1 (1 - \hat{t})^m \cdot \frac{1 - \epsilon}{1 + \epsilon} \, d\hat{t}
= \frac{1 - \epsilon}{1 + \epsilon} \cdot \frac{1}{m + 1}.
\label{eq:cp-prediction}
\end{equation}
Although the algorithm hires \(\hat{i}\) rather than the true optimum \(i^*\), this outcome still counts as a success, because hiring \(\hat{i}\) guarantees a competitive ratio of at least \((1-\epsilon)/(1+\epsilon)\). Note that the probability that $\hat{i}$ arrives before all mistakes is 
 \(1/(m+1)\).

\noindent\emph{(6.2)} 
$\hat{i}$ arrives in secretary mode:
With probability \(1 - (1 - \hat{t})^m\), the algorithm switches mode and proceeds as in \textit{Case~5} above. We divide this part into two further possible situations:

\noindent\emph{(6.2.1)} Assuming we are already in secretary mode, as described in \textit{(6.2)} above, the best candidate among those arriving before \(t^*\) must fall within the sampling window \([0,\tau]\) for \(t^*\) to be observed and eligible for selection.  
Since arrivals are uniform, the probability of this event is \(\tfrac{\tau}{t^*}\).
Averaging over all possible arrival times of \(i^*\) yields
$\int_{\tau}^1 \frac{\tau}{t^*} \, dt^*$.

Note that the general requirement for mode witch mentioned in \textit{(6.2)} and the conditional success probability given this switch (from \textit{(6.2.1)}) are not independent: the probability in \textit{(6.2.1)} 
is conditional on the algorithm having already switched to secretary mode in \textit{(6.2)}. 
Therefore, the overall competitive ratio of part \textit{(6.2.1)} is obtained by multiplying these two probabilities.

\begin{equation}
C'_{6.2.1}(m,.) =\int_0^{\tau} \left( 1 - (1 - \hat{t})^m \right) \cdot \tau \ln\left( \frac{1}{\tau} \right) \, d\hat{t}
= \tau \ln\left( \frac{1}{\tau} \right) \cdot \int_0^{\tau} \left( 1 - (1 - \hat{t})^m \right) \, d\hat{t}.
\label{eq:cp62}
\end{equation}

\noindent\emph{(6.2.2)} When \(\hat{t} \geq \tau\): This part follows the same reasoning as \textit{Case~6} in ~\cref{sec:tbgd-method}. 
In particular, as noted at the start of the proof, it suffices to evaluate the expectation of 
the COSP expression \(C_6(m)\) from Equation~\eqref{eq:case6} over the interval 
\(\hat{t} \in [\tau,1]\). Considering the general requirement in \textit{(6.2)}, we have:
\begin{equation}
    C'_{6.2.2}(m,.) \;=\; \int_{\tau}^{1} \bigl(1 - (1 - \hat{t})^{m}\bigr) \cdot C_{6}(m,\hat{t}) \, d\hat{t}.
    \label{eq:cp633}
\end{equation}

\noindent Summing Equations~\eqref{eq:cp-prediction},~\eqref{eq:cp62}, and~\eqref{eq:cp633}, we have:
\begin{equation}
\begin{aligned}
C'_6{}(m,.) \geq\;
& \frac{1 - \epsilon}{1 + \epsilon} \cdot \frac{1}{m + 1} \\
& + \tau \ln\left( \frac{1}{\tau} \right) \cdot \int_0^{\hat{t}} \left( 1 - (1 - \hat{t})^m \right) \, d\hat{t} \\
& + \int_{\tau}^{1} \left( 1 - (1 - \hat{t})^m \right) \cdot C_6(m, \hat{t},.) \, d\hat{t}.
\end{aligned}
\label{eq:cp6}
\end{equation}



\paragraph{Final Bound.}
To complete the analysis, we evaluate the minimum competitive ratio across all six cases for small parameter values \(m, m_2, k \leq 20\). 
These cases can be checked directly using the explicit formulas derived above with \(\hat{t}\) integrated uniformly. 
For large values of \(m\), \(k\), or \(m_2\), we apply the same bounding strategy as in \cref{alg:lower-bound-proof} for the COSP analysis: exponential terms such as \((1-\beta)^m\) or \((1-\beta)^k\) vanish as the parameters grow, yielding tractable asymptotic bounds. 
Since \(m_2\) is always bounded by \(m\), there is no need to treat the regime with both \(m_2\) and \(k\) large separately.

\medskip\noindent
Combining the numerical evaluation for small parameters with these asymptotic bounds for large parameters, we confirm that the fixed parameter values
\[
\theta = \lowerROSPtheta,\quad \tau = \lowerROSPtau,\quad \gamma = \lowerROSPgamma,\quad \delta = \lowerROSPdelta
\]
guarantee a competitive ratio of at least
\[
\max\left\{ \lowerROSP,\; \tfrac{1 - \epsilon}{1 + \epsilon} \right\}.
\]
Thus, the theorem holds.


\end{proof}

\section{Equivalence of Deterministic and Randomized Algorithms in the Classic Setting}
\label{sec:improved-bounds}

In this section, we analyze the secretary problem with predictions in the
classical random–arrival model, where each candidate’s arrival time is
drawn independently and uniformly from $[0,1]$ (\ROSP). We show that in
this model, deterministic and randomized algorithms are equivalent in
power. In particular, we formally prove that any randomized algorithm can
be simulated by a deterministic one by extracting randomness from the
instance itself—specifically, from the arrival time of the first
candidate.

At first glance, one might interpret the results of \cite{Fujii2023} as
suggesting that no deterministic algorithm can achieve a competitive ratio
better than \(0.25\). However, this upper bound is proved for the
\emph{random–order} model, where only the permutation of candidates is
random. In contrast, both our work and \cite{Fujii2023} analyze the
\emph{random–arrival} model, where each arrival time is drawn independently
from the uniform distribution over $[0,1]$. The deterministic upper bound
from the random–order setting does not transfer to the random–arrival
setting. In fact, in the random–arrival model, we show that deterministic
and randomized algorithms are equivalent in power: any randomized
algorithm can be simulated by a deterministic one by extracting
randomness from the instance itself (e.g., from the arrival time of the
first candidate).


\begin{theorem} \label{thm:det=rand}
In the classical setting secretary problem with predictions (\ROSP model), any randomized algorithm can be simulated by a deterministic algorithm without any loss in performance.

\end{theorem}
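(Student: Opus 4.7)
The plan is to derandomize by extracting the necessary random bits from the continuous arrival times themselves. Given any randomized algorithm $A$ with a random tape $r \sim \mathrm{Unif}[0,1]$ drawn independently of the input, I will construct a deterministic algorithm $D$ whose ``coin'' is a function of the observed arrival times. The structural fact I would exploit is that in \ROSP\ each $t_i$ is drawn independently of the adversarially chosen values and predictions, so any deterministic function of the arrival times is statistically independent of the adversary's instance.

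Concretely, $D$ would wait for the very first arrival at time $t_{(1)}$ and set $r := F(t_{(1)}) = 1-(1-t_{(1)})^n$, where $F$ is the CDF of the first-order statistic among $n$ i.i.d.\ $\mathrm{Unif}[0,1]$ samples. By the probability integral transform $r \sim \mathrm{Unif}[0,1]$, and $D$ then runs $A$ step by step using this $r$ as its random tape. The analysis reduces to showing that for every adversarial instance $I$,
\[
\mathbb{E}_t\bigl[\mathrm{reward}(A(I,t,F(t_{(1)})))\bigr] \;=\; \mathbb{E}_{t,r}\bigl[\mathrm{reward}(A(I,t,r))\bigr].
\]

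The hard part will be a subtle correlation: $r = F(t_{(1)})$ is a deterministic function of $t_{(1)}$, while $A$ may also use $t_{(1)}$ positionally (for example, when comparing it to the secretary-mode threshold $\tau$), so under $D$ the ``coin'' and the ``first observation'' are coupled in a way they are not under $A$. To resolve this I would exploit the continuity of $t_{(1)}$'s distribution: for any precision $\delta>0$ one can write $t_{(1)} = Q_\delta(t_{(1)}) + \delta \cdot U$, where $Q_\delta$ rounds to the nearest multiple of $\delta$ and $U$ is uniform on $[0,1]$ independent of $Q_\delta(t_{(1)})$; such a decomposition exists for every absolutely continuous distribution and cleanly separates ``positional information'' from ``entropy.'' $D$ then feeds $Q_\delta(t_{(1)})$ to $A$ as the positional time while using $U$ as the random tape. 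Because reward is bounded and the quantization perturbation vanishes as $\delta \to 0$, a standard dominated-convergence argument matches $A$'s expected reward within any target for small enough $\delta$, and a diagonalization across a sequence $\delta \to 0$ yields a single deterministic $D$ whose expected reward equals that of $A$ on every instance.
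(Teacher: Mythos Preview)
Your core construction---extract a uniform seed via the probability integral transform on the first arrival time, $r:=F(t_{(1)})=1-(1-t_{(1)})^n$---is exactly what the paper does. The paper, however, stops there and simply asserts that the simulated algorithm has ``identical distributions of outcomes''; you go further and correctly flag the coupling problem: $r$ is a deterministic function of $t_{(1)}$, which $A$ may also use positionally, so the joint law of (input, seed) under $D$ need not match that under $A$ with an external seed. That observation is more careful than the paper's own argument.

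Your repair, however, does not work as stated. The decomposition $t_{(1)}=Q_\delta(t_{(1)})+\delta\cdot U$ with $U\sim\mathrm{Unif}[0,1]$ independent of $Q_\delta(t_{(1)})$ is valid only when $t_{(1)}$ is itself uniform; for $n>1$ the first order statistic has density $n(1-x)^{n-1}$, and conditioned on $Q_\delta(t_{(1)})=k\delta$ the residual inherits this non-uniform shape, so it is neither uniform nor independent of the bin index. Moreover you only attempt to decouple $r$ from $t_{(1)}$, not from the later order statistics $t_{(2)},\dots,t_{(n)}$ (which are dependent on $t_{(1)}$ and are also fed to $A$), and the ``diagonalization across $\delta\to 0$'' that is supposed to produce a single deterministic $D$ is left as a hand-wave requiring unproved continuity of $A$'s expected reward. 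A cleaner fix is to exploit the standard fact that for i.i.d.\ continuous samples the rank permutation $\sigma$ is independent of the vector of order statistics $(t_{(1)},\dots,t_{(n)})$: since $F(t_{(1)})$ is measurable with respect to the latter, it is automatically independent of $\sigma$, so from it you can bit-split out both a fresh tape $\tilde r$ and fresh virtual order statistics $\tilde t_{(1)}<\cdots<\tilde t_{(n)}$ with the correct joint law, then run $A$ on the real permutation $\sigma$ paired with the virtual times and $\tilde r$. This makes the simulated input to $A$ exactly equal in distribution to $A$'s true input, with no limiting argument.
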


\begin{proof}
Let $n$ candidates arrive at times drawn independently and uniformly from $[0,1]$. 
Denote their arrival times by $\{t_i : i \in [n]\}$, and let $t_{1} < t_{2} < \cdots < t_{n}$
denote their sorted arrival times (the order statistics). 
Let $i_1 \in [n]$ be the index of the candidate with the earliest arrival, so that $t_{i_1} = \min_{i \in [n]} t_i$.
Note that $t_{i_1}$ is not uniformly distributed over $[0,1]$; rather, it follows the distribution of the minimum of $n$ i.i.d.\ uniform random variables:
\[
\Pr[t_{i_1} \leq x] = 1 - (1 - x)^n.
\]

Let $F(x) = \Pr[t_{i_1} \leq x] = 1 - (1-x)^n$ denote the CDF of $t_{i_1}$.
By the probability integral transform, the random variable
\[
U = F(t_{i_1})
\]
is distributed uniformly on $[0,1]$.
Thus, from the observed value of $t_{i_1}$, the algorithm can compute $U$ deterministically,
and then extract the binary expansion of $U = 0.x_1x_2x_3\ldots$ to obtain an infinite sequence of unbiased random bits.

Therefore, the first arrival time $t_{i_1}$ provides a perfect internal source of uniform randomness.
Any randomized algorithm $A$ that would normally rely on external random bits
can instead simulate them using the bits of $U$.
As a result, we can construct a deterministic algorithm $A'$ that:
\begin{itemize}
    \item observes $t_{i_1}$,
    \item computes $U = F(t_{i_1})$,
    \item extracts the binary expansion of $U$ to simulate the random bits used by $A$,
    \item and then behaves identically to $A$ on all future steps.
\end{itemize}
Since $A'$ and $A$ have identical distributions of outcomes on every input, their competitive ratios are the same.
Hence, in this model, randomized algorithms are no more powerful than deterministic ones.

\end{proof}
\begin{corollary}
Any upper bound on the competitive ratio achievable by deterministic algorithms 
in the ROSP setting also applies to randomized algorithms.
\end{corollary}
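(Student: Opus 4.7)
The plan is to exploit the continuous randomness present in the arrival times themselves. In \ROSP each $t_i$ is drawn independently from $\mathrm{Unif}[0,1]$, so even the first arrival $t_{i_1} = \min_{i\in[n]} t_i$ is a continuous random variable whose realization carries an infinite stream of random bits that the algorithm gets to observe before any decision becomes consequential. The idea is to derive a uniform random seed from $t_{i_1}$ and use it as the internal tape of a simulator for the randomized algorithm.

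Concretely, I would let $F(x) = 1-(1-x)^n$ denote the CDF of the minimum of $n$ i.i.d.\ uniform variables and invoke the probability integral transform to conclude that $U := F(t_{i_1})$ is distributed as $\mathrm{Unif}[0,1]$. Writing $U = 0.x_1 x_2 x_3 \ldots$ in binary then produces a sequence of i.i.d.\ Bernoulli$(1/2)$ bits. Given an arbitrary randomized algorithm $A$, the deterministic simulator $A'$ waits until the first candidate arrives, reads $t_{i_1}$, computes $U = F(t_{i_1})$, and feeds $A$ the bits $(x_k)$ whenever $A$ would otherwise sample a fresh random bit. Since no candidate arrives before time $t_{i_1}$, no irrevocable decision is made before this seed becomes available, so deferring all of $A$'s randomness to that moment preserves its behavior.

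The hard part is verifying that the extracted bits are genuinely independent of the information to which $A'$ must later react, namely, the remaining arrival times and the identities of the remaining candidates. I would handle this using the standard decomposition of uniform order statistics: conditional on $t_{(1)} = u$, the rescaled variables $V_k := (t_{(k+1)}-u)/(1-u)$ for $k = 1, \ldots, n-1$ are distributed as the order statistics of $n-1$ i.i.d.\ $\mathrm{Unif}[0,1]$ variables and are independent of $u$, and likewise the identity of the first arrival together with the relative ordering of the remaining candidates is independent of $t_{i_1}$. Consequently the bit stream derived from $t_{i_1}$ is independent of everything $A'$ subsequently observes, so feeding it into $A$ yields the same joint distribution as if truly external random bits had been used.

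It then follows that for every fixed adversarial specification $(v_i, \hat v_i)_{i \in [n]}$ the candidate selected by $A'$ has the same distribution as the one selected by $A$, hence the expected value and competitive ratio are preserved. The corollary, that any worst-case upper bound for deterministic algorithms in \ROSP transfers to randomized algorithms, follows immediately.
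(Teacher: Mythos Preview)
Your approach matches the paper's exactly: extract a uniform seed $U = F(t_{i_1})$ from the first arrival via the probability integral transform and feed its binary digits to $A$ in place of external random bits. You actually go a step further than the paper by invoking the order-statistics decomposition to justify independence.

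That justification, however, does not quite close. You correctly show that the \emph{rescaled} later arrivals $V_k = (t_{(k+1)}-t_{(1)})/(1-t_{(1)})$ are independent of $t_{(1)}$ (and hence of $U$). But in your construction $A'$ hands $A$ the \emph{actual} times $t_{(1)}, t_{(2)},\ldots$, not the $V_k$; each $t_{(k)}$ still depends on $t_{(1)}$, and $U$ is a deterministic function of $t_{(1)}$. In particular the bits of $U$ are perfectly correlated with $t_{(1)}$, which $A$ sees at the very first step. A randomized rule that mixes $t_{(1)}$ with its coin flips---say, ``accept the first candidate iff $\mathbf 1[t_{(1)}<1/2]\oplus R_1 = 1$''---has a different output distribution under your simulation than with genuinely external $R_1$. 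So the sentence ``the bit stream derived from $t_{i_1}$ is independent of everything $A'$ subsequently observes'' is not true as stated, and the conclusion that $A'$ and $A$ have identical output distributions does not follow. (The paper's own proof has the same lacuna.)

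The repair is small: use $U$ to manufacture both the random tape \emph{and} a fresh first arrival. Split $U$ into two independent $\mathrm{Unif}[0,1]$ variables $(U_1,U_2)$ by bit-interleaving, take $U_1$ as the seed for $A$, set $t'_{(1)} = F^{-1}(U_2)$, and affinely rescale the later arrivals to $t'_{(k)} = t'_{(1)} + (1-t'_{(1)})\,V_{k-1}$. Then $(t'_{(k)})_{k\le n}$ are distributed as true order statistics and are independent of $U_1$; the map $t_{(k)}\mapsto t'_{(k)}$ is monotone, so the simulation remains online; and the permutation of identities is unchanged. Feeding $A$ the pair $((t'_{(k)}), U_1)$ now reproduces the joint distribution it would see with external randomness, and the corollary follows.
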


\vspace{1em}
\noindent

\section{Concluding Remarks}
In this paper, we studied the secretary problem with predictions under both random-order (ROSP) and chosen-order (COSP) arrivals. We presented a randomized algorithm that follows predictions when they are accurate, but switches to Dynkin’s classical rule when large deviation from prediction occurs, achieving both consistency under accurate predictions and robust guarantees in the worst case. Our analysis shows that the algorithm attains competitive ratios of $\max\{0.221,(1-\epsilon)/(1+\epsilon)\}$ for ROSP and $\max\{0.262,(1-\epsilon)/(1+\epsilon)\}$ for COSP, improving the best-known bounds and going beyond previous barriers in these settings. Recall that \(\epsilon\) is the largest multiplicative error in any prediction. We also proved that randomized and deterministic algorithms are equally powerful in the random-order model, thereby resolving a prior separation claim.

Future research could focus on improving the tightness of the bounds for both models, for instance by approaching the $0.33$ upper limit known for ROSP or closing the gap between our $0.262$ lower bound and the best upper bounds for COSP. Another line of work could be to extend these ideas to other variants such as the $k$-secretary problem or matroid and knapsack constraints. Since decision-makers often have more flexibility in real applications, exploring distributional scheduling of predicted candidates or calibrating algorithmic parameters from data may yield stronger guarantees in practice.

\bibliographystyle{alpha}
\bibliography{references}

\newpage
\appendix
\section{Appendix}
\label{appendix}
This appendix contains the detailed technical proofs and supporting lemmas omitted from the main
text for clarity of presentation. The formulas provided here are general mathematical expressions 
that underpin the analysis, rather than literal restatements of the equations in the main text. 
In particular, we present complete derivations for the case analyses in~\cref{sec:tbgd-method} and~\cref{sec:rand}, along with 
symbolic bounds for large parameter regimes, thereby establishing the rigorous foundations for the 
competitive ratio guarantees stated in~\cref{thm:bgd-competitive} and~\cref{thm:bgd-rand}. For convenience, we reuse the notations introduced in the main body of the paper throughout, 
though not every parameter defined earlier is employed here, and their placement within individual 
formulas may differ from where they were first introduced.



\subsection{Density Function of the Earliest Arrival Time}
We calculate the density function of 
the earliest arrival time of members in
set $M$.\\
Let $a_1, a_2, \dots, a_m$ be $m$ independent and uniformly distributed random variables over the interval $[0,1]$. Define the random variable:

\[
X = \min(a_1, a_2, \dots, a_m).
\]

\subsection*{Cumulative Distribution Function (CDF)}

The cumulative distribution function (CDF) of $X$ is given by:

\[
F_X(x) = P(X \leq x).
\]

This represents the probability that at least one of the $m$ numbers is at most $x$. The complementary event is that all $m$ numbers are greater than $x$, which has probability:

\[
P(a_1 > x, a_2 > x, \dots, a_m > x) = (1 - x)^m.
\]

Thus, the CDF is:

\[
F_X(x) = 1 - (1 - x)^m, \quad 0 \leq x \leq 1.
\]

\subsection*{Probability Density Function (PDF)}

The probability density function (PDF) is obtained by differentiating the CDF with respect to $x$:

\[
\begin{aligned}
    f_X(x) &= \frac{d}{dx} F_X(x) \\
    &= \frac{d}{dx} \left(1 - (1 - x)^m \right) \\
    &= m(1 - x)^{m-1}, \quad 0 \leq x \leq 1.
\end{aligned}
\]
The probability density function of $X = \min(a_1, a_2, \dots, a_m)$ is:

\begin{equation}  
f_X(x) = m(1 - x)^{m-1}, \quad 0 \leq x \leq 1.
\end{equation}

\subsection{Lemmas}
\begin{lemma}\label{lemma:lm1}
\begin{equation}
\int_{a}^{b} f_{X}(x)\, dx = (1-a)^m - (1-b)^m.
\end{equation}
\end{lemma}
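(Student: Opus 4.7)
The plan is to prove this directly by computing the integral of the density function $f_X(x) = m(1-x)^{m-1}$ derived immediately above the lemma, either via a substitution or by using the CDF $F_X(x) = 1 - (1-x)^m$ already established.

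The first approach I would take is the substitution $u = 1-x$, so that $du = -dx$. Under this change of variables, the integrand $m(1-x)^{m-1}\, dx$ becomes $-m u^{m-1}\, du$, which has antiderivative $-u^m = -(1-x)^m$. Evaluating between the limits $a$ and $b$ gives
\[
\int_a^b m(1-x)^{m-1}\, dx = \bigl[-(1-x)^m\bigr]_a^b = -(1-b)^m + (1-a)^m,
\]
which is exactly the claimed identity.

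Alternatively, and perhaps more cleanly, I would invoke the Fundamental Theorem of Calculus together with the CDF expression $F_X(x) = 1 - (1-x)^m$ derived just above the lemma statement. Since $f_X$ is the derivative of $F_X$, the integral over $[a,b]$ equals $F_X(b) - F_X(a) = \bigl(1-(1-b)^m\bigr) - \bigl(1-(1-a)^m\bigr) = (1-a)^m - (1-b)^m$.

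There is essentially no obstacle here: this is a one-line verification that packages the antiderivative of the minimum-order-statistic density into the form used repeatedly in the case analyses of Sections~\ref{sec:tbgd-method} and~\ref{sec:rand}. The only mild care required is the sign bookkeeping in the substitution step, which the CDF-based argument sidesteps entirely.
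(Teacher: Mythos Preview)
Your proposal is correct and your first approach (the substitution $u = 1-x$) is essentially identical to the paper's own proof. Your alternative CDF-based argument is a clean shortcut the paper does not take, but both routes are straightforward and equivalent here.
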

\begin{proof}
What we want to calculate is
\begin{equation*}
J_1 = \int_{a}^{b} m\, (1-x)^{m-1}\, dx.
\end{equation*}
This integral resembles the basic power rule for integration:

\begin{equation*}
\int u^{m-1} \, du = \frac{u^{m}}{m}, \quad \text{for } m \neq 0.
\end{equation*}
Using the substitution \( u = 1 - x \), so that \( du = -dx \), we rewrite the integral as:

\begin{equation*}
J_1 = m \int_{1-b}^{1-a} u^{m-1} du.
\end{equation*}
Applying the power rule:

\begin{equation*}
J_1 = m \left. \frac{u^m}{m} \right|_{1-b}^{1-a}.
\end{equation*}
Substituting the limits \( a \) and \( b \) the final result is::
\begin{equation*}
J_1 = (1 - a)^m - (1 - b)^m.
\end{equation*}
\end{proof}
\begin{lemma}\label{lemma:lm2}
\begin{equation}
\int_{a}^{b} x f_{X}(x)\, dx = (1-a)^m - (1-b)^m - \frac{m}{m+1}.(1-a)^{m+1} + \frac{m}{m+1}.(1-b)^{m+1} .
\end{equation}
\end{lemma}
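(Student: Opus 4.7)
The plan is to evaluate the integral $\int_a^b x\, m(1-x)^{m-1}\, dx$ directly by a change of variables, reusing essentially the same technique that already appeared in the proof of Lemma~\ref{lemma:lm1}. Specifically, I would substitute $u = 1-x$, so that $du = -dx$ and $x = 1 - u$. Under this substitution the limits swap: $x=a$ becomes $u = 1-a$ and $x=b$ becomes $u = 1-b$. The integral then becomes
\begin{equation*}
\int_{a}^{b} x\, f_X(x)\, dx \;=\; m\int_{1-b}^{1-a} (1-u)\, u^{m-1}\, du.
\end{equation*}

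Next I would split this into two elementary power-rule integrals,
\begin{equation*}
m\int_{1-b}^{1-a} u^{m-1}\, du \;-\; m\int_{1-b}^{1-a} u^{m}\, du,
\end{equation*}
and evaluate each using $\int u^{k}\, du = u^{k+1}/(k+1)$. The first piece gives $\left[u^{m}\right]_{1-b}^{1-a} = (1-a)^m - (1-b)^m$, and the second gives $\frac{m}{m+1}\left[u^{m+1}\right]_{1-b}^{1-a} = \frac{m}{m+1}\bigl[(1-a)^{m+1}-(1-b)^{m+1}\bigr]$. Subtracting the second from the first and rearranging yields exactly the claimed closed form.

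There is no real obstacle here: the lemma is a direct one-variable substitution, and the only thing to watch is the orientation of the limits after the substitution (the minus sign from $du = -dx$ combined with the reversed limits yields an overall $+$). An equally clean alternative would be integration by parts with $u = x$ and $dv = m(1-x)^{m-1}\,dx$, giving $du = dx$ and $v = -(1-x)^m$; the boundary term contributes $\bigl[-x(1-x)^m\bigr]_a^b$ and the remaining integral $\int_a^b (1-x)^m\, dx$ evaluates by Lemma~\ref{lemma:lm1}-style arithmetic. Either route produces the stated identity in a few lines, so I would present the substitution version for conciseness and consistency with the preceding lemma.
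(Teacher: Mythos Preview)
Your proposal is correct and follows essentially the same approach as the paper: both substitute $u=1-x$, expand $(1-u)u^{m-1}=u^{m-1}-u^m$, and integrate each term by the power rule to obtain the stated closed form.
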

\begin{proof}

We start with
\begin{equation*}
J_2 = \int_{a}^{b} \frac{x}{\beta}\, m\, (1-x)^{m-1}\, dx.
\end{equation*}
Factor out the constant:
\begin{equation*}
J_2 = m \int_{a}^{b} x\, (1-x)^{m-1}\, dx.
\end{equation*}
Now, substitute:
\begin{equation*}
u = 1-x, \quad \text{so that} \quad x = 1-u.
\end{equation*}
Also,
\begin{equation*}
du = -\,dx \quad \Longrightarrow \quad dx = -\,du.
\end{equation*}
The limits change as:
\begin{equation*}
\text{When } x = a, \; u = 1-a; \quad \text{when } x = b, \; u = 1-b.
\end{equation*}
Substitute into the integral:
\begin{equation*}
J_2 = m\int_{u=1-a}^{1-b} (1-u)\, u^{m-1}\, (-du).
\end{equation*}
Reverse the limits to remove the negative sign:
\begin{equation*}
J_2 = m\int_{u=1-b}^{1-a} (1-u)\, u^{m-1}\, du.
\end{equation*}
Expand the integrand:
\begin{equation*}
(1-u)\, u^{m-1} = u^{m-1} - u^m.
\end{equation*}
Thus,
\begin{equation*}
J_2 = m \left[ \int_{1-b}^{1-a} u^{m-1}\, du - \int_{1-b}^{1-a} u^m\, du \right].
\end{equation*}
Integrate term by term:
\begin{equation*}
\int u^{m-1}\, du = \frac{u^m}{m},
\end{equation*}
\begin{equation*}
\int u^m\, du = \frac{u^{m+1}}{m+1}.
\end{equation*}
Substitute the antiderivatives:
\begin{equation*}
J_2 = m\left[ \left.\frac{u^m}{m}\right|_{u=1-b}^{1-a} - \left.\frac{u^{m+1}}{m+1}\right|_{u=1-b}^{1-a} \right].
\end{equation*}
Simplify:
\begin{equation*}
J_2 = \left[ (1-a)^m - (1-b)^m - \frac{m}{m+1}\left((1-a)^{m+1} - (1-b)^{m+1}\right) \right].
\end{equation*}

\end{proof}

\begin{lemma}\label{lemma:lm3}
\begin{equation}
\int_{a}^{b} \frac{1}{x}\, dx = ln(b/a).
\end{equation}
\end{lemma}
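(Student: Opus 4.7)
The plan is to prove this standard calculus identity directly from the fundamental theorem of calculus together with the basic logarithm subtraction identity. First I would note that the integrand $1/x$ is continuous on any interval $[a,b]$ with $0 < a \le b$, which matches the setting needed for the other lemmas in this appendix (the variable $x$ represents an arrival time in a subinterval of $(0,1]$, so the singularity of $1/x$ at $0$ is not an issue). Continuity on $[a,b]$ guarantees that the Riemann integral exists and can be evaluated via any antiderivative.

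Next I would invoke the fact that $F(x) := \ln(x)$ is an antiderivative of $1/x$ on $(0,\infty)$, since $\frac{d}{dx}\ln(x) = 1/x$. Applying the fundamental theorem of calculus gives
\[
\int_a^b \frac{1}{x}\,dx \;=\; F(b) - F(a) \;=\; \ln(b) - \ln(a),
\]
and the proof then concludes by applying the logarithm subtraction identity $\ln(b) - \ln(a) = \ln(b/a)$.

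There is no genuine technical obstacle: the lemma is a one-line textbook identity. The only point worth flagging explicitly is the implicit hypothesis $0 < a \le b$, which is needed so that both $\ln(a)$ and $\ln(b)$ are defined and so that the integrand has no singularity inside $[a,b]$. This hypothesis is satisfied in every invocation of \cref{lemma:lm3} within the paper, where $a$ and $b$ are arrival-time thresholds such as $\tau$, $\beta$, or $\hat{t}$ lying strictly inside $(0,1]$.
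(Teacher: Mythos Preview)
Your proof is correct and follows the same approach as the paper's own proof: invoke $\ln(x)$ as an antiderivative of $1/x$, apply the fundamental theorem of calculus to obtain $\ln(b)-\ln(a)$, and rewrite this as $\ln(b/a)$. The paper's version is simply a terser two-line computation without the explicit discussion of continuity or the domain hypothesis $0<a\le b$.
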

\begin{proof}
We have
\begin{equation*}
J_3 = \int_{a}^{b} \frac{1}{x}\, dx.
\end{equation*}

\begin{equation*}
J_3 = ln(b) - ln(a) = ln(b/a).
\end{equation*}

\end{proof}

\begin{lemma}\label{lemma:lm4}
\begin{equation}
\int_{a}^{b} \frac{(1-x)^m}{x}\, dx =  \sum_{i=1}^{m} \binom{m}{i} (-1)^{i}\frac{(b^i - a^i)}{i} + \ln\left(\frac{b}{a}\right)
\end{equation}
\end{lemma}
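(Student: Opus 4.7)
The plan is to reduce the integral to elementary pieces via the binomial expansion of $(1-x)^m$, then integrate term by term. The only subtle point is isolating the constant ($i=0$) term in the expansion, since that term contributes the logarithm, while all higher-order terms contribute pure power-function integrals.

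\emph{Step 1: Binomial expansion.} By the binomial theorem,
\begin{equation*}
(1-x)^m \;=\; \sum_{i=0}^{m} \binom{m}{i}(-1)^i\, x^i.
\end{equation*}
Dividing by $x$ gives
\begin{equation*}
\frac{(1-x)^m}{x} \;=\; \frac{1}{x} \;+\; \sum_{i=1}^{m} \binom{m}{i}(-1)^i\, x^{i-1},
\end{equation*}
where I have split off the $i=0$ contribution so that the remaining sum is free of any negative powers of $x$ and is therefore integrable on intervals that stay away from zero in the way needed here.

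\emph{Step 2: Termwise integration.} Using \Cref{lemma:lm3}, the isolated term yields $\int_a^b \frac{1}{x}\,dx = \ln(b/a)$. For each $i \ge 1$, the antiderivative of $x^{i-1}$ is $x^i/i$, so
\begin{equation*}
\int_a^b x^{i-1}\,dx \;=\; \frac{b^i - a^i}{i}.
\end{equation*}
Summing these contributions with the prefactors $\binom{m}{i}(-1)^i$ gives exactly the stated formula. Since the sum has only finitely many terms, swapping summation and integration requires no justification beyond linearity of the integral.

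\emph{Main obstacle.} There is essentially no obstacle: the result is a mechanical consequence of the binomial theorem together with \Cref{lemma:lm3} and the power rule. The only item to flag is that $a > 0$ is implicitly required so that $1/x$ is integrable on $[a,b]$; this is consistent with the usage in the main text, where the lemma is applied with $a = \tau > 0$ (or $a = \beta > 0$), never at $a = 0$.
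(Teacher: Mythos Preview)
Your proof is correct and follows essentially the same approach as the paper: both expand $(1-x)^m$ via the binomial theorem, divide by $x$, and integrate term by term, isolating the $i=0$ term as the source of the logarithm. Your version is slightly more explicit in citing \Cref{lemma:lm3} and in noting the implicit hypothesis $a>0$, but the argument is identical.
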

\begin{proof}
We begin by expanding the numerator using the binomial theorem:
\[
(1 - x)^m = \sum_{i=0}^{m} \binom{m}{i} (-1)^i x^i.
\]
Thus,
\[
\frac{(1 - x)^m}{x} = \sum_{i=0}^{m} \binom{m}{i} (-1)^i x^{i - 1}.
\]

Now integrate both sides over the interval \([a, b]\):
\[
\int_a^b \frac{(1 - x)^m}{x} \, dx = \sum_{i=0}^{m} \binom{m}{i} (-1)^i \int_a^b x^{i - 1} \, dx.
\]

We now compute the integral for each term. For \(i = 0\), we have
\[
\int_a^b x^{-1} \, dx = \ln\left(\frac{b}{a}\right).
\]

For \(i \geq 1\), we have
\[
\int_a^b x^{i - 1} \, dx = \frac{b^i - a^i}{i}.
\]

Substituting back into the sum, we obtain:
\[
\int_a^b \frac{(1 - x)^m}{x} \, dx = \binom{m}{0} (-1)^0 \ln\left(\frac{b}{a}\right) + \sum_{i=1}^{m} \binom{m}{i} (-1)^i \cdot \frac{b^i - a^i}{i}.
\]

Since \(\binom{m}{0} = 1\) and \((-1)^0 = 1\), the final result is:
\[
\int_a^b \frac{(1 - x)^m}{x} \, dx = \ln\left(\frac{b}{a}\right) + \sum_{i=1}^{m} \binom{m}{i} (-1)^i \cdot \frac{b^i - a^i}{i}.
\]
\end{proof}
\begin{lemma}
\label{lem:lemma41}   
\begin{equation}
\int_{\tau}^{\beta} \left(1 - (1 - t^*)^{m-1} \right) \cdot \frac{\tau}{t^*} \, dt^*
= \tau \sum_{i=1}^{m-1} \binom{m-1}{i} (-1)^{i+1} \cdot \frac{\beta^i - \tau^i}{i}.
\end{equation}
\end{lemma}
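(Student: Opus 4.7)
The plan is to split the integrand into two pieces and evaluate each using the basic integral lemmas already established in the appendix. Specifically, write
\[
\int_{\tau}^{\beta} \left(1 - (1 - t^*)^{m-1} \right) \cdot \frac{\tau}{t^*} \, dt^*
= \tau \int_{\tau}^{\beta} \frac{1}{t^*}\, dt^* \;-\; \tau \int_{\tau}^{\beta} \frac{(1 - t^*)^{m-1}}{t^*}\, dt^*.
\]
The first term is immediate from \Cref{lemma:lm3}, yielding $\tau \ln(\beta/\tau)$. The second term fits the form of \Cref{lemma:lm4} with the exponent $m$ replaced by $m-1$, giving
\[
\tau \int_{\tau}^{\beta} \frac{(1 - t^*)^{m-1}}{t^*}\, dt^*
= \tau \ln\!\left(\tfrac{\beta}{\tau}\right) + \tau \sum_{i=1}^{m-1} \binom{m-1}{i} (-1)^{i} \cdot \frac{\beta^i - \tau^i}{i}.
\]

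The next step is simply to subtract. The $\tau \ln(\beta/\tau)$ contributions cancel exactly, leaving
\[
\int_{\tau}^{\beta} \left(1 - (1 - t^*)^{m-1} \right) \cdot \frac{\tau}{t^*} \, dt^*
= -\,\tau \sum_{i=1}^{m-1} \binom{m-1}{i} (-1)^{i} \cdot \frac{\beta^i - \tau^i}{i}.
\]
Flipping the sign of $(-1)^i$ to $(-1)^{i+1}$ produces the claimed identity.

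There is no real obstacle here: the lemma is a routine corollary of \Cref{lemma:lm3,lemma:lm4}, and the only care required is to invoke \Cref{lemma:lm4} with exponent $m-1$ rather than $m$ and to track the sign change when moving the negation inside the sum. No convergence or interchange-of-sum-and-integral issues arise because the binomial expansion is finite.
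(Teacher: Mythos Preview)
Your proposal is correct and follows essentially the same route as the paper's own proof: factor out $\tau$, split the integrand into $\frac{1}{t^*}$ and $\frac{(1-t^*)^{m-1}}{t^*}$, apply \Cref{lemma:lm3} and \Cref{lemma:lm4} (with exponent $m-1$), observe that the $\ln(\beta/\tau)$ terms cancel, and flip the sign on $(-1)^i$.
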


\begin{proof}
We begin by factoring out the constant \(\tau\) from the integral:
\[
\int_{\tau}^{\beta} \left(1 - (1 - t^*)^{m-1} \right) \cdot \frac{\tau}{t^*} \, dt^*
= \tau \int_{\tau}^{\beta} \left(1 - (1 - t^*)^{m-1} \right) \cdot \frac{1}{t^*} \, dt^*.
\]

Next, split the integral:
\[
\tau \int_{\tau}^{\beta} \left(1 - (1 - t^*)^{m-1} \right) \cdot \frac{1}{t^*} \, dt^*
= \tau \left[ \int_{\tau}^{\beta} \frac{1}{t^*} \, dt^*
- \int_{\tau}^{\beta} \frac{(1 - t^*)^{m-1}}{t^*} \, dt^* \right].
\]

From~\cref{lemma:lm3}, we know:
\[
\int_{\tau}^{\beta} \frac{1}{t^*} \, dt^* = \ln\left( \frac{\beta}{\tau} \right),
\]
and from~\cref{lemma:lm4} with \( m \leftarrow m-1 \), we get:
\[
\int_{\tau}^{\beta} \frac{(1 - t^*)^{m-1}}{t^*} \, dt^*
= \sum_{i=1}^{m-1} \binom{m-1}{i} (-1)^i \cdot \frac{\beta^i - \tau^i}{i}
+ \ln\left( \frac{\beta}{\tau} \right).
\]

Substituting both expressions back:
\begin{align*}
&\tau \left[ \ln\left( \frac{\beta}{\tau} \right)
- \left( \sum_{i=1}^{m-1} \binom{m-1}{i} (-1)^i \cdot \frac{\beta^i - \tau^i}{i}
+ \ln\left( \frac{\beta}{\tau} \right) \right) \right] \\
&= -\tau \sum_{i=1}^{m-1} \binom{m-1}{i} (-1)^i \cdot \frac{\beta^i - \tau^i}{i} \\
&= \tau \sum_{i=1}^{m-1} \binom{m-1}{i} (-1)^{i+1} \cdot \frac{\beta^i - \tau^i}{i}.
\end{align*}
\end{proof}

\begin{lemma}
\label{lem:lemma42}  
\begin{align}
\int_{\beta}^{1}\left(1 - (1-t^*)^k \right) \cdot \frac{\tau}{t^*} 
 dt^* =  \tau \sum_{i=1}^{k} \binom{k}{i} (-1)^{i+1} \cdot \frac{1 - \beta^i}{i}.
\end{align}
\end{lemma}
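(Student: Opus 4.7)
The plan is to mirror the proof of \Cref{lem:lemma41} almost verbatim, since the only differences are the integration limits (now $[\beta,1]$ instead of $[\tau,\beta]$) and the exponent ($k$ instead of $m-1$). First I would factor the constant $\tau$ out of the integrand and split the integral into two pieces:
\[
\int_{\beta}^{1}\bigl(1 - (1-t^*)^k\bigr)\cdot\frac{\tau}{t^*}\,dt^*
= \tau\!\int_{\beta}^{1}\frac{1}{t^*}\,dt^* \;-\; \tau\!\int_{\beta}^{1}\frac{(1-t^*)^k}{t^*}\,dt^*.
\]

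Next I would evaluate the two pieces using the auxiliary lemmas already in the appendix. Applying \Cref{lemma:lm3} with $a=\beta$ and $b=1$ gives $\int_{\beta}^{1}\frac{1}{t^*}\,dt^* = \ln(1/\beta)$. Applying \Cref{lemma:lm4} with $a=\beta$, $b=1$, and the exponent specialized to $k$ gives
\[
\int_{\beta}^{1}\frac{(1-t^*)^k}{t^*}\,dt^*
= \ln\!\left(\tfrac{1}{\beta}\right) + \sum_{i=1}^{k}\binom{k}{i}(-1)^i\cdot\frac{1-\beta^i}{i}.
\]

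Finally I would substitute both expressions back and observe that the $\tau\ln(1/\beta)$ terms cancel, leaving
\[
-\tau\sum_{i=1}^{k}\binom{k}{i}(-1)^i\cdot\frac{1-\beta^i}{i}
= \tau\sum_{i=1}^{k}\binom{k}{i}(-1)^{i+1}\cdot\frac{1-\beta^i}{i},
\]
which matches the claimed identity. There is no genuine obstacle here: the whole argument is a direct substitution into the two prior lemmas, and the main thing to be careful about is the sign flip when moving $(-1)^i$ to $(-1)^{i+1}$ after cancellation of the logarithmic terms. Because everything reduces to previously established binomial/logarithm identities, the proof is essentially mechanical once the decomposition in the first step is written down.
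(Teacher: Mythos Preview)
Your proposal is correct and matches the paper's own proof exactly: the paper simply states that the argument is identical to \Cref{lem:lemma41} with integration bounds $[\beta,1]$ in place of $[\tau,\beta]$ and $k$ in place of $m-1$, invoking \Cref{lemma:lm3} and \Cref{lemma:lm4} as you do. If anything, you have written out the cancellation of the $\ln(1/\beta)$ terms and the sign flip more explicitly than the paper does.
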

\begin{proof}
The proof is identical to that of~\cref{lem:lemma41}, except with integration bounds \([\beta, 1]\) instead of \([\tau, \beta]\), and using \(k\) in place of \(m - 1\). The same binomial expansion and application of~\cref{lemma:lm3} and~\cref{lemma:lm4} apply.
\end{proof}

\begin{lemma}
\label{lem:lemma43}  
\begin{equation}
\begin{aligned}
\int_{\beta}^{1} (1 - t^*)^k \left[
\left(1 - (1 - \beta)^{m_2} \right)(1 - \delta)\frac{\tau}{\beta} + 
(1 - \beta)^{m_2}(1 - \gamma)
\right] dt^* \\
= \frac{(1 - \beta)^{k+1}}{k+1} \cdot \left[
\left(1 - (1 - \beta)^{m_2} \right)(1 - \delta)\frac{\tau}{\beta} + 
(1 - \beta)^{m_2}(1 - \gamma)
\right]
\end{aligned}
\end{equation}

\end{lemma}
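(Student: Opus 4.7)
The plan is straightforward, since the key observation is that the bracketed expression in the integrand does not depend on the integration variable $t^*$. Specifically, the factor
\[
\left(1 - (1 - \beta)^{m_2} \right)(1 - \delta)\frac{\tau}{\beta} + (1 - \beta)^{m_2}(1 - \gamma)
\]
involves only the parameters $\beta, m_2, \delta, \tau, \gamma$, all of which are fixed relative to the integration. Hence the entire bracket can be pulled out of the integral as a constant multiplier, and the problem reduces to evaluating the scalar integral $\int_{\beta}^{1} (1 - t^*)^k \, dt^*$.

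To handle the remaining integral, I would apply the substitution $u = 1 - t^*$, so that $du = -dt^*$, with the limits transforming as $t^* = \beta \mapsto u = 1-\beta$ and $t^* = 1 \mapsto u = 0$. This yields
\[
\int_{\beta}^{1} (1-t^*)^k \, dt^* \;=\; \int_{0}^{1-\beta} u^k \, du \;=\; \frac{(1-\beta)^{k+1}}{k+1},
\]
which is the desired closed form. Multiplying this back by the constant bracket recovers exactly the right-hand side of the lemma.

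There is no real obstacle in this proof; the argument is essentially a one-line computation once one recognizes that the bracket is constant in $t^*$. The only minor care needed is in verifying the orientation of the substitution and ensuring the sign comes out correctly, but this is routine. Alternatively, one can avoid substitution entirely by directly antidifferentiating: $\frac{d}{dt^*}\!\left[-\tfrac{(1-t^*)^{k+1}}{k+1}\right] = (1-t^*)^k$, and evaluating at the endpoints $t^* = 1$ and $t^* = \beta$ gives the same answer. Either route establishes the identity immediately.
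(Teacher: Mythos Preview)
Your proposal is correct and follows essentially the same approach as the paper: factor out the bracket (which is constant in $t^*$) and evaluate $\int_{\beta}^{1}(1-t^*)^k\,dt^* = \tfrac{(1-\beta)^{k+1}}{k+1}$. The only cosmetic difference is that the paper invokes its \cref{lemma:lm1} for this last integral, whereas you carry out the substitution $u=1-t^*$ (or antidifferentiate) directly.
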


\begin{proof}
Let us denote the constant coefficient by:
\[
C := \left[
\left(1 - (1 - \beta)^{m_2} \right)(1 - \delta)\frac{\tau}{\beta} + 
(1 - \beta)^{m_2}(1 - \gamma)
\right].
\]

Then the integral becomes:
\[
\int_{\beta}^{1} (1 - t^*)^k \cdot C \, dt^* = C \cdot \int_{\beta}^{1} (1 - t^*)^k \, dt^*.
\]

Now apply~\cref{lemma:lm1} with \( m = k + 1 \), \( a = \beta \), \( b = 1 \):
\[
\int_{\beta}^{1} (1 - t^*)^k \, dt^* = \frac{(1 - \beta)^{k+1}}{k + 1}.
\]

Thus:
\[
\int_{\beta}^{1} (1 - t^*)^k \cdot C \, dt^* = \frac{(1 - \beta)^{k+1}}{k+1} \cdot C,
\]
which proves the lemma.
\end{proof}

\begin{lemma}
\label{lem:lemma51}
\begin{align}
\int_{\tau}^{\beta} \left[
\left(1 - (1 - t^*)^{m-1} \right) \cdot \frac{\tau}{t^*} + (1 - t^*)^{m-1}
\right] dt^* \notag \\
= \tau \sum_{i=1}^{m-1} \binom{m-1}{i} (-1)^{i+1} \cdot \frac{\beta^i - \tau^i}{i} 
+ \frac{(1 - \tau)^m - (1 - \beta)^m}{m}.
\end{align}
\end{lemma}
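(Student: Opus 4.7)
The plan is to split the integrand into its two natural summands and evaluate each piece using lemmas that have already been established earlier in the appendix.

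First I would write
\[
\int_{\tau}^{\beta} \left[\left(1 - (1-t^*)^{m-1}\right)\frac{\tau}{t^*} + (1-t^*)^{m-1}\right] dt^*
= I_1 + I_2,
\]
where $I_1 := \int_{\tau}^{\beta} \left(1 - (1-t^*)^{m-1}\right) \cdot \frac{\tau}{t^*}\, dt^*$ and $I_2 := \int_{\tau}^{\beta} (1-t^*)^{m-1}\, dt^*$. The first integral $I_1$ is exactly the expression evaluated in \cref{lem:lemma41}, so I can invoke that lemma directly to obtain
\[
I_1 = \tau \sum_{i=1}^{m-1} \binom{m-1}{i} (-1)^{i+1} \cdot \frac{\beta^i - \tau^i}{i}.
\]

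For $I_2$, I would appeal to \cref{lemma:lm1}, which states that $\int_a^b m(1-x)^{m-1}\, dx = (1-a)^m - (1-b)^m$. Dividing through by $m$ and substituting $a = \tau$, $b = \beta$ yields
\[
I_2 = \frac{(1-\tau)^m - (1-\beta)^m}{m}.
\]
Adding $I_1$ and $I_2$ produces the right-hand side claimed in the lemma, completing the proof.

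There is no real obstacle here: the lemma is purely a linearity-of-integration step combined with two already-proved identities. The only thing to double check is that the index $m-1$ in the exponent matches the form used in \cref{lem:lemma41} (it does) and that the reduction from \cref{lemma:lm1} (which is stated for the density $m(1-x)^{m-1}$) to the bare $(1-x)^{m-1}$ integrand introduces the correct factor of $1/m$. Both are routine, so the proof should be only a few lines long.
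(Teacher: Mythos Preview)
Your proposal is correct and essentially identical to the paper's proof: it splits the integral into the same $I_1$ and $I_2$, invokes \cref{lem:lemma41} for $I_1$, and uses \cref{lemma:lm1} (rescaled by $1/m$) for $I_2$.
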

\begin{proof}
We split the integral into two parts:
\[
\int_{\tau}^{\beta} \left[
\left(1 - (1 - t^*)^{m-1} \right) \cdot \frac{\tau}{t^*} + (1 - t^*)^{m-1}
\right] dt^*
= I_1 + I_2,
\]
where
\[
I_1 = \int_{\tau}^{\beta} \left(1 - (1 - t^*)^{m-1} \right) \cdot \frac{\tau}{t^*} \, dt^*, \quad
I_2 = \int_{\tau}^{\beta} (1 - t^*)^{m-1} \, dt^*.
\]

From~\cref{lem:lemma41}, we have:
\[
I_1 = \tau \sum_{i=1}^{m-1} \binom{m-1}{i} (-1)^{i+1} \cdot \frac{\beta^i - \tau^i}{i}.
\]

From~\cref{lemma:lm1} with \(f_X(x) = m(1 - x)^{m-1}\), so \( (1 - x)^{m-1} = \frac{1}{m} f_X(x) \), we get:
\[
I_2 = \int_{\tau}^{\beta} (1 - t^*)^{m-1} \, dt^*
= \frac{1}{m} \int_{\tau}^{\beta} f_X(t^*) \, dt^*
= \frac{1}{m} \left[(1 - \tau)^m - (1 - \beta)^m \right].
\]

Combining the two:
\[
I_1 + I_2 = \tau \sum_{i=1}^{m-1} \binom{m-1}{i} (-1)^{i+1} \cdot \frac{\beta^i - \tau^i}{i} 
+ \frac{(1 - \tau)^m - (1 - \beta)^m}{m}.
\]
\end{proof}

\begin{lemma}
\label{lem:lemma52}  
\begin{align}
\int_{\beta}^{1} \left(1 - (1 - t^*)^k \right) \cdot \frac{\tau}{t^*} \cdot \left(1 - (1 - \beta)^{m - 1} \right) dt^* \notag \\
= \tau \sum_{i=1}^{k} \binom{k}{i} (-1)^{i+1} \cdot \frac{1 - \beta^i}{i} \cdot \left(1 - (1 - \beta)^{m - 1} \right).
\end{align}
\end{lemma}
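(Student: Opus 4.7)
The plan is to observe that the two factors $\tau$ and $\bigl(1 - (1-\beta)^{m-1}\bigr)$ are constant with respect to the integration variable $t^{*}$ (the latter depends only on $\beta$ and $m$, not on $t^{*}$), and to pull them outside the integral. What remains is exactly the integral already evaluated in \cref{lem:lemma42}, so no new computation is needed.

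Concretely, first I would write
\begin{equation*}
\int_{\beta}^{1} \bigl(1 - (1 - t^{*})^{k}\bigr) \cdot \frac{\tau}{t^{*}} \cdot \bigl(1 - (1 - \beta)^{m-1}\bigr)\, dt^{*}
\;=\; \bigl(1 - (1 - \beta)^{m-1}\bigr) \int_{\beta}^{1} \bigl(1 - (1 - t^{*})^{k}\bigr) \cdot \frac{\tau}{t^{*}}\, dt^{*}.
\end{equation*}
Next I would invoke \cref{lem:lemma42}, which gives
\begin{equation*}
\int_{\beta}^{1} \bigl(1 - (1 - t^{*})^{k}\bigr) \cdot \frac{\tau}{t^{*}}\, dt^{*}
\;=\; \tau \sum_{i=1}^{k} \binom{k}{i} (-1)^{i+1} \cdot \frac{1 - \beta^{i}}{i}.
\end{equation*}
Substituting this expression back and collecting the constant factor $\bigl(1 - (1-\beta)^{m-1}\bigr)$ yields the claimed identity.

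I do not anticipate a main obstacle here: the lemma is essentially a trivial corollary of \cref{lem:lemma42}, obtained by factoring out a quantity that is constant with respect to $t^{*}$. The only thing to verify carefully is that the factor $\bigl(1-(1-\beta)^{m-1}\bigr)$ really is independent of $t^{*}$, which is immediate from inspection. Thus the proof reduces to one line of factoring plus one citation of the prior lemma.
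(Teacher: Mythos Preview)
Your proposal is correct and matches the paper's proof essentially line for line: the paper also defines $C := \bigl(1-(1-\beta)^{m-1}\bigr)$, observes it is constant in $t^{*}$, factors it out, and then applies \cref{lem:lemma42} to the remaining integral.
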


\begin{proof}
Let 
\[
C := \left(1 - (1 - \beta)^{m - 1} \right),
\]
which is constant with respect to \(t^*\). Then the integral becomes:
\[
\int_{\beta}^{1} \left(1 - (1 - t^*)^k \right) \cdot \frac{\tau}{t^*} \cdot C \, dt^*
= C \cdot \int_{\beta}^{1} \left(1 - (1 - t^*)^k \right) \cdot \frac{\tau}{t^*} \, dt^*.
\]

Using~\cref{lem:lemma42}, we substitute:
\[
\int_{\beta}^{1} \left(1 - (1 - t^*)^k \right) \cdot \frac{\tau}{t^*} \, dt^* 
= \tau \sum_{i=1}^{k} \binom{k}{i} (-1)^{i+1} \cdot \frac{1 - \beta^i}{i}.
\]

Multiplying by \(C\), we get:
\[
\int_{\beta}^{1} \left(1 - (1 - t^*)^k \right) \cdot \frac{\tau}{t^*} \cdot C \, dt^* 
= \tau \sum_{i=1}^{k} \binom{k}{i} (-1)^{i+1} \cdot \frac{1 - \beta^i}{i} \cdot \left(1 - (1 - \beta)^{m - 1} \right).
\]
\end{proof}

\begin{lemma}
\label{lem:lemma53}  
\begin{align}
\int_{\beta}^{1} (1 - t^*)^k \cdot \left[ \left(1 - (1 - \beta)^{m_2} \right)(1 - \delta)\frac{\tau}{\beta} \right] dt^* \notag \\
= \frac{(1 - \beta)^{k+1}}{k+1} \cdot \left[ \left(1 - (1 - \beta)^{m_2} \right)(1 - \delta)\frac{\tau}{\beta} \right].
\end{align}
\end{lemma}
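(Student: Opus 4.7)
).}
The plan is to observe that the entire bracketed expression
\(\bigl[(1 - (1 - \beta)^{m_2})(1 - \delta)\tfrac{\tau}{\beta}\bigr]\)
does not depend on the integration variable \(t^*\); it is a constant with respect to \(t^*\). Therefore my first step is to pull this constant outside the integral, reducing the problem to evaluating the single-variable integral
\(\int_{\beta}^{1} (1 - t^*)^k \, dt^*\).

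Next, I would evaluate this remaining integral directly. One natural approach is the substitution \(u = 1 - t^*\), so that \(du = -dt^*\) and the limits transform from \(t^* \in [\beta, 1]\) to \(u \in [0, 1-\beta]\) (after flipping sign). This yields
\[
\int_{\beta}^{1} (1 - t^*)^k \, dt^* \;=\; \int_{0}^{1-\beta} u^k \, du \;=\; \frac{(1-\beta)^{k+1}}{k+1}.
\]
Alternatively, I can invoke \cref{lemma:lm1} with \(m \leftarrow k+1\), \(a \leftarrow \beta\), \(b \leftarrow 1\), which gives the same value since \((1-1)^{k+1} = 0\). Either way, the computation is immediate.

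Multiplying the evaluated integral by the constant factor that was pulled out at the start gives exactly
\(\tfrac{(1-\beta)^{k+1}}{k+1}\cdot\bigl[(1 - (1-\beta)^{m_2})(1-\delta)\tfrac{\tau}{\beta}\bigr]\),
as claimed. I do not expect any real obstacle here: this lemma is essentially a corollary of \cref{lem:lemma43}, obtained by dropping the \((1-\beta)^{m_2}(1-\gamma)\) summand from the bracketed constant, and the argument is just ``factor out a constant, then integrate a single power''. The only sanity check to perform is that the bracketed factor genuinely has no \(t^*\)-dependence, which is immediate from its form (it depends only on \(\beta\), \(m_2\), \(\delta\), and \(\tau\)).
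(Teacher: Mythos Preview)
Your proposal is correct and matches the paper's proof essentially step for step: the paper also names the bracketed expression as a constant $C$, factors it out, and evaluates $\int_{\beta}^{1}(1-t^*)^k\,dt^* = \frac{(1-\beta)^{k+1}}{k+1}$ directly. Your observation that this is a special case of \cref{lem:lemma43} is accurate as well.
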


\begin{proof}
Let us denote the constant factor by:
\[
C := \left(1 - (1 - \beta)^{m_2} \right)(1 - \delta)\frac{\tau}{\beta}.
\]

Then the integral simplifies to:
\[
\int_{\beta}^{1} (1 - t^*)^k \cdot C \, dt^* = C \cdot \int_{\beta}^{1} (1 - t^*)^k \, dt^*.
\]

Applying the standard power integral:
\[
\int_{\beta}^{1} (1 - t^*)^k \, dt^* = \frac{(1 - \beta)^{k+1}}{k + 1},
\]
we obtain:
\[
\int_{\beta}^{1} (1 - t^*)^k \cdot C \, dt^* = \frac{(1 - \beta)^{k+1}}{k + 1} \cdot C.
\]
\end{proof}

\begin{lemma}
\label{lem:lemmaX41}  
\begin{equation}
\int_0^{\tau}\left(\int_{\tau}^{1}\frac{\tau}{t^*}\left(1 - (1 - t^*)^k\right) + (1 - t^*)^k \,dt^*\right)d\beta
= \tau\left[-\tau\sum_{i=1}^{k} \binom{k}{i} (-1)^i \frac{1 - \tau^i}{i} + \frac{(1 - \tau)^{k+1}}{k+1}\right]
\end{equation}
\end{lemma}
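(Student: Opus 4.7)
The first observation is that the inner integrand is entirely independent of the outer integration variable $\beta$: neither $\tfrac{\tau}{t^*}(1-(1-t^*)^k)$ nor $(1-t^*)^k$ involves $\beta$. Consequently the outer integral $\int_0^{\tau} (\cdot)\, d\beta$ contributes only a multiplicative factor of $\tau$. So the plan is to pull this factor out front and reduce the problem to evaluating a single one-dimensional integral over $t^* \in [\tau,1]$.

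Next I would split the remaining integral into two pieces:
\[
I_A = \int_{\tau}^{1} \frac{\tau}{t^*}\bigl(1-(1-t^*)^k\bigr)\,dt^*, \qquad I_B = \int_{\tau}^{1} (1-t^*)^k\,dt^*.
\]
For $I_A$, the integrand is exactly of the form treated in \cref{lem:lemma42}, but with the lower limit $\beta$ of that lemma replaced by $\tau$. Re-running the short calculation (binomial-expand $(1-t^*)^k$, subtract the $\tfrac{1}{t^*}$ divergences via \cref{lemma:lm3} and \cref{lemma:lm4}) yields
\[
I_A \;=\; \tau \sum_{i=1}^{k} \binom{k}{i}(-1)^{i+1}\frac{1-\tau^i}{i} \;=\; -\tau \sum_{i=1}^{k} \binom{k}{i}(-1)^{i}\frac{1-\tau^i}{i}.
\]
For $I_B$, a single substitution $u = 1 - t^*$ (or directly invoking \cref{lemma:lm1}) gives $I_B = \tfrac{(1-\tau)^{k+1}}{k+1}$.

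Finally I would assemble the pieces: the full expression equals $\tau \cdot (I_A + I_B)$, which is exactly the right-hand side of the claim. There is no real obstacle here, since each step is either a trivial observation (the $\beta$-independence) or a direct invocation of an already-proven auxiliary lemma; the only thing to be careful about is the sign bookkeeping when converting $(-1)^{i+1}$ to $-(-1)^i$ so that the stated form of the answer matches. The structure of the proof is therefore: (i) factor out $\tau$ from the trivial $\beta$-integral, (ii) split into $I_A$ and $I_B$, (iii) apply \cref{lem:lemma42} (with endpoint substitution) and \cref{lemma:lm1}, and (iv) combine.
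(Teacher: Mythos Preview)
Your proposal is correct and follows essentially the same approach as the paper's own proof: both observe the $\beta$-independence to extract a factor of $\tau$, split the inner integral into the same two pieces, apply \cref{lem:lemma42} on $[\tau,1]$ for the first and a direct power-rule integration for the second, and then combine.
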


\begin{proof}
We begin by observing that the integrand does not depend on \(\beta\), so the outer integral is simply a multiplication by \(\tau\):
\[
\int_0^{\tau} \left( \int_{\tau}^{1} \left[ \frac{\tau}{t^*}(1 - (1 - t^*)^k) + (1 - t^*)^k \right] dt^* \right) d\beta
= \tau \cdot \int_{\tau}^{1} \left[ \frac{\tau}{t^*}(1 - (1 - t^*)^k) + (1 - t^*)^k \right] dt^*.
\]

We now compute the inner integral, splitting it as:
\[
\int_{\tau}^{1} \left[ \frac{\tau}{t^*}(1 - (1 - t^*)^k) \right] dt^*
+ \int_{\tau}^{1} (1 - t^*)^k dt^*.
\]

The first term is handled by~\cref{lem:lemma42}, applied on \([\tau, 1]\):
\[
\int_{\tau}^{1} \frac{\tau}{t^*}(1 - (1 - t^*)^k) \, dt^*
= \tau \sum_{i=1}^k \binom{k}{i} (-1)^{i+1} \cdot \frac{1 - \tau^i}{i}
= -\tau \sum_{i=1}^k \binom{k}{i} (-1)^i \cdot \frac{1 - \tau^i}{i}.
\]

The second term is a standard power integral:
\[
\int_{\tau}^{1} (1 - t^*)^k dt^* = \frac{(1 - \tau)^{k+1}}{k + 1}.
\]

Putting both together:
\[
\tau \cdot \left( -\tau \sum_{i=1}^k \binom{k}{i} (-1)^i \cdot \frac{1 - \tau^i}{i}
+ \frac{(1 - \tau)^{k+1}}{k + 1} \right),
\]
which completes the proof.
\end{proof}

\begin{lemma}
\label{lem:lemmaX42}  
 \begin{equation}
\begin{aligned}
 \int_{\tau}^{1} C_4(m) \, d\beta   =
 & \tau \sum_{i=1}^{m-1} \binom{m-1}{i} (-1)^{i+1} \left( \frac{1 - \tau^{i+1}}{i(i+1)} - \frac{\tau^i (1 - \tau)}{i} \right) \\
& + \tau \sum_{i=1}^{k} \binom{k}{i} (-1)^{i+1} \left( \frac{1 - \tau}{i} - \frac{1 - \tau^{i+1}}{i(i+1)} \right) \\
& + \frac{(1 - \delta)\tau}{k+1} \left[ 
\sum_{i=1}^{k+1} \binom{k+1}{i} (-1)^i \frac{1 - \tau^i}{i}
- \sum_{i=1}^{k+1+m_2} \binom{k+1+m_2}{i} (-1)^i \frac{1 - \tau^i}{i}
\right] \\
& + \frac{(1 - \gamma)}{k+1} \cdot \frac{(1 - \tau)^{k+2 + m_2}}{k+2 + m_2}
\end{aligned}  
 \end{equation}
\end{lemma}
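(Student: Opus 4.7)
The plan is to establish this identity by integrating $C_4(m)$ term by term over $\beta \in [\tau, 1]$, using the three summands of Equation~\eqref{eq:case4} as the natural decomposition. The first two summands depend on $\beta$ only through polynomial factors ($\beta^i$ and $1 - \beta^i$), so their integration reduces to elementary antiderivatives. Specifically, I would compute $\int_\tau^1 (\beta^i - \tau^i)/i\, d\beta$ to produce the first line $\frac{1 - \tau^{i+1}}{i(i+1)} - \frac{\tau^i(1 - \tau)}{i}$, and $\int_\tau^1 (1 - \beta^i)/i\, d\beta$ to produce the second line $\frac{1 - \tau}{i} - \frac{1 - \tau^{i+1}}{i(i+1)}$. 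Both of these steps are immediate applications of the power rule and match the target lines after multiplying by $\tau$ and the corresponding binomial coefficients.

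Next, I would split the third summand into two parts along the bracket structure. The $(1 - \gamma)$ part simplifies cleanly: since it equals $\frac{1 - \gamma}{k+1} (1 - \beta)^{k + 1 + m_2}$, a direct application of~\cref{lemma:lm1}-style integration yields $\frac{1 - \gamma}{k+1} \cdot \frac{(1 - \tau)^{k+2+m_2}}{k+2+m_2}$, reproducing the final line of the claimed identity. The $(1 - \delta)$ part is the main technical obstacle, since it contains a factor $\frac{1}{\beta}$ multiplied against $(1 - \beta)^{k+1}\bigl(1 - (1-\beta)^{m_2}\bigr) = (1-\beta)^{k+1} - (1-\beta)^{k+1+m_2}$. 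Here I would apply~\cref{lemma:lm4} to each of the two terms $\int_\tau^1 \frac{(1-\beta)^n}{\beta}\, d\beta$ with $n = k+1$ and $n = k+1+m_2$. The logarithmic terms $\ln(1/\tau)$ produced by the two applications cancel exactly due to the subtraction, leaving only the alternating binomial sums that appear in the third line of the target.

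The main obstacle is bookkeeping rather than any deep analytic step: one must carefully track the cancellation of the $\ln(1/\tau)$ contributions from the two $\frac{(1-\beta)^n}{\beta}$ integrals, and verify that the sign conventions $(-1)^{i+1}$ versus $(-1)^i$ used throughout are consistent with the formulas as written. Once the cancellation is confirmed, summing all pieces produces exactly the four blocks of the claimed identity. No new lemmas are needed beyond~\cref{lemma:lm1} and~\cref{lemma:lm4}; the proof is essentially a guided computation, and its correctness follows by inspection after collecting the four integrated components.
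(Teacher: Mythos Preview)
Your proposal is correct and mirrors the paper's own argument: both integrate the summands of \eqref{eq:case4} term by term over $\beta\in[\tau,1]$, handle the first two sums by the power rule, and split the bracketed factor into the $(1-\delta)$ and $(1-\gamma)$ parts, treating the latter via a simple power integral and the former via \cref{lemma:lm4} with the $\ln(1/\tau)$ contributions cancelling. The only cosmetic difference is that the paper cites the derived \cref{lem:lemma42} for the $(1-\delta)$ part, whereas you invoke the underlying \cref{lemma:lm4} directly and make the log cancellation explicit.
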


\begin{proof}
We integrate each term of \( C_4(m) \) over the interval \( \beta \in [\tau, 1] \). From the definition:
\[
C_4(m) = 
\tau \sum_{i=1}^{m-1} \binom{m-1}{i} (-1)^{i+1} \cdot \frac{\beta^i - \tau^i}{i}
+ \tau \sum_{i=1}^{k} \binom{k}{i} (-1)^{i+1} \cdot \frac{1 - \beta^i}{i}
+ \frac{(1 - \beta)^{k+1}}{k + 1} \cdot \left[
A(\beta) + B(\beta)
\right],
\]
where
\[
A(\beta) := \left(1 - (1 - \beta)^{m_2} \right)(1 - \delta)\frac{\tau}{\beta}, \qquad
B(\beta) := (1 - \beta)^{m_2}(1 - \gamma).
\]

\textbf{First term:}
\[
\int_{\tau}^{1} \tau \sum_{i=1}^{m-1} \binom{m-1}{i} (-1)^{i+1} \cdot \frac{\beta^i - \tau^i}{i} \, d\beta
= \tau \sum_{i=1}^{m-1} \binom{m-1}{i} (-1)^{i+1}
\left( \frac{1 - \tau^{i+1}}{i(i+1)} - \frac{\tau^i(1 - \tau)}{i} \right).
\]

\textbf{Second term:}
\[
\int_{\tau}^{1} \tau \sum_{i=1}^{k} \binom{k}{i} (-1)^{i+1} \cdot \frac{1 - \beta^i}{i} \, d\beta
= \tau \sum_{i=1}^{k} \binom{k}{i} (-1)^{i+1}
\left( \frac{1 - \tau}{i} - \frac{1 - \tau^{i+1}}{i(i+1)} \right).
\]

\textbf{Third term (part A):}
Using~\cref{lem:lemma42} applied to the function \( A(\beta) \), and expanding:
\[
\int_{\tau}^{1} \frac{(1 - \beta)^{k+1}}{k+1} \cdot A(\beta) \, d\beta
= \frac{(1 - \delta)\tau}{k+1} \left[
\sum_{i=1}^{k+1} \binom{k+1}{i} (-1)^i \cdot \frac{1 - \tau^i}{i}
- \sum_{i=1}^{k+1+m_2} \binom{k+1+m_2}{i} (-1)^i \cdot \frac{1 - \tau^i}{i}
\right].
\]

\textbf{Fourth term (part B):}
\[
\int_{\tau}^{1} \frac{(1 - \beta)^{k+1 + m_2}}{k+1} (1 - \gamma) \, d\beta
= \frac{(1 - \gamma)}{k+1} \cdot \frac{(1 - \tau)^{k+2 + m_2}}{k+2 + m_2}.
\]

Adding all terms completes the proof.
\end{proof}

\begin{lemma}
\label{lem:lemmaX51}
\begin{equation}
\begin{aligned}
\int_0^{\tau} \left( \int_{\tau}^{1}
\frac{\tau}{t^*} \left(1 - (1 - t^*)^k\right) \left(1 - (1 - \beta)^m\right)
+ (1 - t^*)^k \left(1 - (1 - \beta)^{m_2}\right) \, dt^* \right) d\beta \\
= - \tau \left( \sum_{i=1}^{k} \binom{k}{i} (-1)^i \frac{1 - \tau^i}{i} \right)
\left( \tau - \frac{1 - (1 - \tau)^{m+1}}{m+1} \right) \\
\quad + \frac{(1 - \tau)^{k+1}}{k + 1} \left( \tau - \frac{1 - (1 - \tau)^{m_2+1}}{m_2+1} \right)
\end{aligned}
\end{equation}
\end{lemma}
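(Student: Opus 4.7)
The plan is to recognize that the inner integrand splits as a sum of two separable products: a $\beta$-only factor $(1-(1-\beta)^m)$ times the $t^*$-only factor $\tfrac{\tau}{t^*}(1-(1-t^*)^k)$, plus a $\beta$-only factor $(1-(1-\beta)^{m_2})$ times the $t^*$-only factor $(1-t^*)^k$. Because the domain is the rectangle $[0,\tau]\times[\tau,1]$, Fubini's theorem will let me pull the $\beta$-dependent pieces outside the inner $t^*$-integral and rewrite the full double integral as
\begin{equation*}
A\cdot \int_0^\tau \bigl(1-(1-\beta)^m\bigr)\,d\beta \;+\; B\cdot \int_0^\tau \bigl(1-(1-\beta)^{m_2}\bigr)\,d\beta,
\end{equation*}
where $A=\int_\tau^1 \tfrac{\tau}{t^*}(1-(1-t^*)^k)\,dt^*$ and $B=\int_\tau^1 (1-t^*)^k\,dt^*$ depend only on $k$ and $\tau$.

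Next I would evaluate each of the four one-dimensional integrals. For the two $\beta$-integrals, the substitution $u=1-\beta$ gives, for any positive integer $p$,
\begin{equation*}
\int_0^\tau \bigl(1-(1-\beta)^p\bigr)\,d\beta \;=\; \tau - \frac{1-(1-\tau)^{p+1}}{p+1},
\end{equation*}
which already produces exactly the two parenthesised factors appearing in the claimed right-hand side (with $p=m$ and $p=m_2$ respectively). For $A$, I would invoke \cref{lem:lemma42} with its lower integration bound $\beta$ replaced by $\tau$; inspection of that lemma's proof shows it only uses that the lower limit lies in $[0,1]$, so the substitution is legitimate and yields $A = -\tau\sum_{i=1}^k \binom{k}{i}(-1)^i\tfrac{1-\tau^i}{i}$. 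For $B$, \cref{lemma:lm1} (with $m \leftarrow k+1$ and the factor $1/m$ absorbed) or an elementary substitution immediately gives $B = \tfrac{(1-\tau)^{k+1}}{k+1}$.

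Finally I would multiply $A$ by the first $\beta$-integral and $B$ by the second, and sum, which reproduces the target formula term-by-term. The only mildly subtle point is checking that \cref{lem:lemma42} really transports from the lower limit $\beta$ to $\tau$; but this is just a notational rename, so no new analytical ingredient is needed beyond Fubini and the substitutions already appearing elsewhere in the appendix. I do not anticipate any genuine obstacle beyond careful bookkeeping of signs in the binomial sum.
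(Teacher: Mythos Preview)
Your proposal is correct and follows essentially the same approach as the paper's proof: split into two separable terms, factor the $\beta$- and $t^*$-dependent pieces, evaluate the four one-dimensional integrals (the paper does not name Fubini explicitly but uses it implicitly), and multiply and sum. The paper's computations of $A$, $B$, and the two $\beta$-integrals match yours line for line.
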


\begin{proof}
We split the double integral into two parts:
\[
\int_0^{\tau} \left( \int_{\tau}^{1}
\frac{\tau}{t^*} \left(1 - (1 - t^*)^k\right) \left(1 - (1 - \beta)^m\right) \, dt^* \right) d\beta
+
\int_0^{\tau} \left( \int_{\tau}^{1}
(1 - t^*)^k \left(1 - (1 - \beta)^{m_2}\right) \, dt^* \right) d\beta.
\]

For the first term, the inner integral is
\[
\int_{\tau}^{1} \frac{\tau}{t^*} \left(1 - (1 - t^*)^k\right) \, dt^*
= \tau \sum_{i=1}^{k} \binom{k}{i} (-1)^{i+1} \cdot \frac{1 - \tau^i}{i}
= -\tau \sum_{i=1}^{k} \binom{k}{i} (-1)^i \cdot \frac{1 - \tau^i}{i}.
\]
The outer integral is
\[
\int_0^{\tau} \left(1 - (1 - \beta)^m \right) \, d\beta
= \tau - \int_0^{\tau} (1 - \beta)^m \, d\beta
= \tau - \frac{1 - (1 - \tau)^{m+1}}{m+1}.
\]
Multiplying the two gives:
\[
- \tau \left( \sum_{i=1}^{k} \binom{k}{i} (-1)^i \cdot \frac{1 - \tau^i}{i} \right)
\left( \tau - \frac{1 - (1 - \tau)^{m+1}}{m+1} \right).
\]

For the second term, the inner integral is
\[
\int_{\tau}^{1} (1 - t^*)^k \, dt^* = \frac{(1 - \tau)^{k+1}}{k + 1},
\]
and the outer integral is
\[
\int_0^{\tau} \left(1 - (1 - \beta)^{m_2} \right) \, d\beta
= \tau - \frac{1 - (1 - \tau)^{m_2 + 1}}{m_2 + 1}.
\]
Multiplying them yields:
\[
\frac{(1 - \tau)^{k+1}}{k + 1} \cdot \left( \tau - \frac{1 - (1 - \tau)^{m_2 + 1}}{m_2 + 1} \right).
\]

Adding both parts completes the proof.
\end{proof}

\begin{lemma}
\label{lem:lemmaX52}
Let \( C_5(m) \) be defined as in the display. Then,
\begin{equation}
\begin{aligned}
 \int_{\tau}^{1} C_5(m) \, d\beta
&\geq \tau \sum_{i=1}^{m-1} \binom{m-1}{i} (-1)^{i+1}
\left( \frac{1 - \tau^{i+1}}{i(i+1)} - \frac{\tau^i (1 - \tau)}{i} \right) \\
&\quad + \frac{(1 - \tau)^m (1 - \tau)}{m} - \frac{(1 - \tau)^{m+1}}{m(m+1)} \\
&\quad + \tau \sum_{i=1}^{k} \binom{k}{i} (-1)^{i+1} \frac{1}{i}
\left[
(1 - \tau) - \frac{1 - \tau^{i+1}}{i+1} - \frac{(1 - \tau)^m}{m}
+ \frac{i!\,(m-1)!}{(i + m)!} - \int_0^{\tau} \beta^i (1 - \beta)^{m - 1} d\beta
\right] \\
&\quad + \frac{(1 - \delta)\tau}{k+1} \left[
\sum_{i=1}^{k+1} \binom{k+1}{i} (-1)^i \frac{1 - \tau^i}{i}
- \sum_{i=1}^{k+1 + m_2} \binom{k+1 + m_2}{i} (-1)^i \frac{1 - \tau^i}{i}
\right]
\end{aligned}
\end{equation}
\end{lemma}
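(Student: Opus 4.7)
The plan is to decompose $C_5(m)$ from Equation~\eqref{eq:case5} into its four summands $T_1(\beta),T_2(\beta),T_3(\beta),T_4(\beta)$, integrate each over $\beta\in[\tau,1]$, and collect the results. The first two pieces are elementary polynomial integrations that mirror exactly the analogous steps in the proof of~\cref{lem:lemmaX42}.

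For $T_1(\beta)=\tau\sum_{i=1}^{m-1}\binom{m-1}{i}(-1)^{i+1}\frac{\beta^i-\tau^i}{i}$, I would integrate termwise using $\int_{\tau}^{1}\beta^i\,d\beta=\frac{1-\tau^{i+1}}{i+1}$ and $\int_{\tau}^{1}\tau^i\,d\beta=\tau^i(1-\tau)$; this produces the first line of the target expression. For $T_2(\beta)=\frac{(1-\tau)^m-(1-\beta)^m}{m}$, I would use $\int_{\tau}^{1}(1-\beta)^m\,d\beta=\frac{(1-\tau)^{m+1}}{m+1}$ to obtain the second line $\frac{(1-\tau)^m(1-\tau)}{m}-\frac{(1-\tau)^{m+1}}{m(m+1)}$.

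The third piece $T_3(\beta)=\tau\sum_{i=1}^{k}\binom{k}{i}(-1)^{i+1}\frac{1-\beta^i}{i}\bigl(1-(1-\beta)^{m-1}\bigr)$ is the main source of complication. I would expand the product $(1-\beta^i)\bigl(1-(1-\beta)^{m-1}\bigr)=1-\beta^i-(1-\beta)^{m-1}+\beta^i(1-\beta)^{m-1}$ and integrate termwise over $[\tau,1]$. Three of the four summands integrate elementarily to $(1-\tau)$, $-\frac{1-\tau^{i+1}}{i+1}$, and $-\frac{(1-\tau)^m}{m}$, respectively. The fourth, $\int_{\tau}^{1}\beta^i(1-\beta)^{m-1}\,d\beta$, is an incomplete Beta integral, which I would rewrite as
\begin{equation*}
\int_{\tau}^{1}\beta^i(1-\beta)^{m-1}\,d\beta = \int_{0}^{1}\beta^i(1-\beta)^{m-1}\,d\beta - \int_{0}^{\tau}\beta^i(1-\beta)^{m-1}\,d\beta = \frac{i!\,(m-1)!}{(i+m)!} - \int_{0}^{\tau}\beta^i(1-\beta)^{m-1}\,d\beta,
\end{equation*}
using the Beta evaluation $B(i+1,m)=\frac{i!\,(m-1)!}{(i+m)!}$ and leaving the tail unevaluated. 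Multiplying through by $\tau\binom{k}{i}(-1)^{i+1}/i$ and summing over $i$ then yields exactly the third line of the target expression.

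For the fourth piece $T_4(\beta)=\frac{(1-\beta)^{k+1}}{k+1}\bigl(1-(1-\beta)^{m_2}\bigr)(1-\delta)\frac{\tau}{\beta}$, I would follow the same strategy used for the analogous third term in the proof of~\cref{lem:lemmaX42}: write $\bigl(1-(1-\beta)^{m_2}\bigr)(1-\beta)^{k+1}=(1-\beta)^{k+1}-(1-\beta)^{k+1+m_2}$, factor the $\tau/\beta$, and apply~\cref{lem:lemma42} separately with exponents $k+1$ and $k+1+m_2$ over the interval $[\tau,1]$, producing the two binomial sums in the fourth line. The main obstacle will be in the third piece: the incomplete Beta integral $\int_{0}^{\tau}\beta^i(1-\beta)^{m-1}\,d\beta$ has no elementary closed form in $\tau$, so I would need to justify keeping it unevaluated in the final answer rather than forcing a simplification (the expression in the lemma statement reflects this by carrying the tail explicitly). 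The remaining steps are routine, but require careful bookkeeping of signs, binomial coefficients, and index shifts to match the exact summation ranges in the target expression.
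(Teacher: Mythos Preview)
Your proposal is correct and follows essentially the same route as the paper: both split $C_5(m)$ into the same four summands, integrate the first two by elementary power rules, expand the product in the third term and express $\int_\tau^1 \beta^i(1-\beta)^{m-1}\,d\beta$ via the complete Beta value minus the tail on $[0,\tau]$, and handle the fourth term by writing $(1-(1-\beta)^{m_2})(1-\beta)^{k+1}$ as a difference of two powers so that the logarithmic pieces from~\cref{lemma:lm4} cancel. The only cosmetic discrepancy is that both you and the paper cite~\cref{lem:lemma42} for the last step when the underlying identity is really~\cref{lemma:lm4}; the cancellation of the $\ln(1/\tau)$ terms in the difference is what makes the final expression match.
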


\begin{proof}
We analyze each term of the integrand \( C_5(m) \), given by:
\[
C_5(m) =
\tau \sum_{i=1}^{m-1} \binom{m-1}{i} (-1)^{i+1} \cdot \frac{\beta^i - \tau^i}{i}
+ \frac{(1 - \tau)^m - (1 - \beta)^m}{m}
+ \tau \sum_{i=1}^{k} \binom{k}{i} (-1)^{i+1} \cdot \frac{1 - \beta^i}{i} \cdot \left(1 - (1 - \beta)^{m-1} \right) \\
+ \frac{(1 - \beta)^{k+1}}{k+1} \cdot \left[
(1 - (1 - \beta)^{m_2}) (1 - \delta)\frac{\tau}{\beta}
\right].
\]

We integrate each term over \( \beta \in [\tau, 1] \):

First term:
\[
\tau \sum_{i=1}^{m-1} \binom{m-1}{i} (-1)^{i+1} \int_{\tau}^{1} \frac{\beta^i - \tau^i}{i} d\beta
= \tau \sum_{i=1}^{m-1} \binom{m-1}{i} (-1)^{i+1} \left( \frac{1 - \tau^{i+1}}{i(i+1)} - \frac{\tau^i (1 - \tau)}{i} \right).
\]

Second term:
\[
\int_{\tau}^{1} \frac{(1 - \tau)^m - (1 - \beta)^m}{m} \, d\beta
= \frac{(1 - \tau)^m (1 - \tau)}{m} - \frac{(1 - \tau)^{m+1}}{m(m+1)}.
\]

Third term:
This requires expanding both parts:
\[
\int_{\tau}^{1} \frac{1 - \beta^i}{i} \cdot \left(1 - (1 - \beta)^{m-1} \right) d\beta
= \left[ \frac{1 - \tau}{i} - \frac{1 - \tau^{i+1}}{i(i+1)} \right]
- \left[ \frac{(1 - \tau)^m}{m i} - \frac{i!(m-1)!}{(i + m)!} + \int_0^{\tau} \beta^i (1 - \beta)^{m - 1} d\beta \right].
\]
Multiplying by \(\tau\binom{k}{i}(-1)^{i+1}\) and summing gives the third block.

Fourth term:
The integral of the last term is given directly by~\cref{lem:lemma42} with a prefactor \((1 - \delta)\tau/(k+1)\) and using binomial identity expansions:
\[
\int_{\tau}^{1} \frac{(1 - \beta)^{k+1}}{\beta} \, d\beta
= \sum_{i=1}^{k+1} \binom{k+1}{i} (-1)^i \frac{1 - \tau^i}{i}, \quad
\int_{\tau}^{1} \frac{(1 - \beta)^{k+1 + m_2}}{\beta} \, d\beta
= \sum_{i=1}^{k+1 + m_2} \binom{k+1 + m_2}{i} (-1)^i \frac{1 - \tau^i}{i}.
\]
Subtracting gives the fourth term.

Combining all four completes the proof.
\end{proof}


\end{document}